\documentclass[11pt]{article} 
\usepackage[letterpaper,margin=1in]{geometry} 

\usepackage[backend=biber]{biblatex}
\usepackage{graphicx, amsmath, amsfonts,amsthm, hyperref,subcaption,environ, tabularx, mathtools}
\usepackage[dvipsnames]{xcolor}
\usepackage{thmtools} 
\usepackage[capitalise]{cleveref}
\usepackage[shortlabels]{enumitem}
\usepackage{multirow, multicol}
\usepackage{makecell} 
\usepackage{verbatim}
\usepackage{comment}

\usepackage{authblk} 

\usepackage{tikz}
\usetikzlibrary{arrows,arrows.meta,automata, positioning, calc, intersections, shapes, fit, math, graphs, quotes, decorations.pathreplacing}
\usepackage{circledsteps}

\usepackage{ifthen} 

\newcounter{darrow}

\newcommand{\C}{\mathcal{C}} 

\newcommand{\VCSP}{\textsc{VCSP}}

\newcommand*\circled[1]{\tikz[baseline=(char.base)]{
    \node[shape=circle,draw,inner sep=2pt] (char) {#1};}}
\newcommand*\circledT[1]{\tikz[baseline=(char.base)]{
    \node[shape=circle,draw,inner sep=1pt] (char) {#1};}}

\theoremstyle{plain} 
\newtheorem{theorem}{Theorem} 
\newtheorem{lemma}[theorem]{Lemma} 
\newtheorem{proposition}[theorem]{Proposition}
\newtheorem{conjecture}[theorem]{Conjecture}

\theoremstyle{definition} 
\newtheorem{definition}[theorem]{Definition}
\newtheorem{example}[theorem]{Example}

\theoremstyle{remark} 

\title{Binary constraints on one additional variable\\ can create exponential ascents
for local search}

\author[1]{David A. Cohen}
\author[2]{Peter G. Jeavons}
\author[3,4]{Artem Kaznatcheev}
\author[3,4]{Melle {van Marle}}
\author[3,4]{Sofia {Vazquez~Alferez}}

\affil[1]{Department of Computer Science, Royal Holloway University of London
}
\affil[2]{Department of Computer Science, University of Oxford}
\affil[3]{Department of Mathematics, Utrecht University}
\affil[4]{Department of Information and Computing Sciences, Utrecht University}

\date{}
\begin{document}

\maketitle
\begin{abstract}
Local search in combinatorial optimisation can be viewed as an uphill climb on a corresponding fitness landscape,
where the assignments visited by a strict local search follow an ascent.
This hill-climbing is sometimes surprisingly efficient, but not always.
Since fitness landscapes can be succinctly represented by Boolean valued constraint satisfaction problems (VCSPs), we ask: 
what properties of VCSPs ensure that all ascents are polynomial?
Or alternatively, what are the ``simplest'' VCSPs with exponential ascents?
Prior examples of VCSPs with exponential ascents were built up as a chain of gadgets of constraints.
Here we investigate what happens for simpler starlike graphs.
To establish lower bounds on the \emph{longest} ascent, we construct:
(1) a star of binary constraints with a quadratic longest ascent,
(2) a binary VCSP of treedepth $3$ on $4n + 1$ Boolean variables (made by gluing $2n$ triangles of constraints at a common centre variable) with an exponential longest ascent of length $10\cdot2^n - 9$, and
(3) starlike VCSP of logarithmic vertex cover number with an exponential longest ascent.
For \emph{steepest} ascent, we prove (4) a tight bound of $2(n - 1)$ for VCSPs with star constraint graphs, and construct (5) a starlike tree VCSP with treedepth $3$ with a quadratic steepest ascent.
Finally, we prove upper bounds on the \emph{shortest} ascent from any initial assignment:
(6) linear for VCSPs of treedepth 3; and
(7) $O(2^k (n - k))$ for VCSPs of vertex cover number $k$.
Together (2) and (6) establish treedepth $3$, and (3) and (7) establish logarithmic vertex cover number, as the first structural graph parameters for which the longest and shortest ascents can be exponentially separated.
We discuss the consequences of our results for the parameterized complexity of local search.
\end{abstract}

\section{Introduction}

Local search heuristics are popular methods for combinatorial optimisation~\cite{LocalSearchCO_Thesis,LocalSearch_Book1,LocalSearch_Book2,LocalSearch_Book3,LocalSearch_Book4},
and ascents in fitness landscapes are proving to be a useful lens to study the efficiency or inefficiency of these methods~\cite{hakenSteepest,MaxCutDeg4,repCP,tw7,effective-and-efficient}.
Given a finite set of Boolean variables $V$, a space of \emph{assignments} $\{0,1\}^V$, and an objective function $f: \{0,1\}^V \rightarrow \mathbb{Z}$
(that we call the \emph{fitness function}) to be maximized,
local search starts with some initial assignment $x^0 \in \{0,1\}^V$ and follows some update rule to proceed to another assignment until it eventually finds a good one.
We say that two assignments $x,y \in \{0,1\}^V$ are \emph{adjacent} if they differ on a single variable 
(i.e., $\exists u \in V$ such that $x_u \neq y_u$ and $\forall v \in V \backslash \{u\}, \; x_v = y_v$) 
and $x$ is a local peak if for all adjacent $y$, $f(x) \geq f(y)$.
Based on this,
we say the method is a \emph{strict local search} if the update rule always returns an adjacent assignment of strictly better objective value,
or terminates at a local peak if a better adjacent assignment is not available.
The sequence of such assignments $x^0,x^1,\ldots,x^T$ is an \emph{ascent}: each $x^{t + 1}$ is adjacent to $x^t$, $f(x^{t + 1}) > f(x^t)$, and $x^T$ is a local peak.
A \emph{fitness landscape} is the pair consisting of the fitness function and the rule for when two assignments are adjacent~\cite{W32,evoPLS,repCP}.

For researchers wanting to understand the efficiency or inefficiency of local search, this has set clear research agendas. 
On the one hand, identify families of fitness landscapes that ensure short ascents to their peaks~\cite{MaxCutDeg3Easy,repCP,effective-and-efficient}. 
On the other hand, construct ``simple'' families of fitness landscapes that have exponentially long ascents~\cite{hakenSteepest,MaxCutDeg4,slow-greed,VCSPpw3_allExp}.
The latter approach was at work as early as the 1970s, with \textcite{KM72} constructing a $d$-cube where an ascent taken by a simplex pivot rule visits all $2^d$ assignments on the way to the only peak.
This construction recursively took a $(d-1)$-cube with a long ascent, 
made a ``reversed'' version of it, 
and then added the $d$th variable such that fixing $x_d = 0$ induces the ``original'' $(d-1)$-cube and fixing $x_d = 1$ induces the ``reversed'' cube.
In this way, \textcite{KM72} doubled the long ascent with each new variable.
This Klee-Minty recursive construction has served as a model for many later constructions squeezing exponential ascents into fitness landscapes~\cite{fittestHard2,fittestHard3,LP_DeformedProduct,evoPLS,randomFitter1,randomFitter2,randomFacetBound}.
But given the recursive nature of these constructions, are they really that ``simple''?

\subsection{Representing fitness landscapes by valued constraint graphs}

To measure ``simplicity'', we need a compact representation of fitness landscapes.
To arrive at such a compact representation, notice that any pseudo-Boolean fitness function $f: \{0,1\} \rightarrow \mathbb{Z}$ can be written down as a polynomial:
\begin{equation}
f(x) = \sum_{S \in \mathcal{S}} w(S) \prod_{u \in S} x_u
\label{eq:VCSP_QBF}
\end{equation}
where $\mathcal{S} \subseteq 2^V$ is a \emph{set of scopes},
$w: \mathcal{S} \rightarrow \mathbb{Z} \setminus \{0\}$ is the \emph{constraint weight function},
and each \emph{scope} $S \in \mathcal{S}$ is a set of variables $x_u$ with $u \in S$ that are involved in an AND-constraint of weight $w(S)$.
In this representation, (locally) maximizing $f$ is known as a Boolean \emph{valued constraint satisfaction problem (VCSP)}
with a specific $(V,\mathcal{S},w)$ as the VCSP-instance.\footnote{
Already for binary Boolean VCSPs, finding a global peak is NP-hard and even finding a local peak -- the focus of our work -- is complete for the complexity class of polynomial local search~\cite{PLS_Survey,PLS,W2SAT_PLS1,W2SAT_PLS2}.
This mean that we do not expect even \emph{non-local} algorithms to find local peaks in binary Boolean VCSPs in polynomial time.
But even if non-local algorithms can find local peaks on subclasses of binary Boolean VCSPs, it does not mean that local search algorithms will be able to as well~\cite{tw7,effective-and-efficient,VCSPpw3_allExp}.
}

Elsewhere in the AI literature, the valued constraints that comprise a Boolean VCSP-instance $(V,\mathcal{S},\{C_S\;|\;S \in \mathcal{S}\})$ are often defined as arbitrary pseudo-Boolean functions $C_S: \{0,1\}^S \rightarrow \mathbb{Z}$, instead of our specific $C_S(x[S]) = w(S)\prod_{u\in S}x_u$.
But these two perspectives are equivalent: one can always express an arbitrary valued constraint $C_S$ as a sum of AND-constraints. 
For example, given a binary Boolean valued constraint $C_{\{1,2\}}$ on $x_1,x_2$ that maps $00$ to weight $a$, $10$ to weight $b$, $01$ to weight $c$, and $11$ to weight $d$, we can write 
\begin{align}
C_{\{1,2\}}(x_1,x_2) & = a(1 - x_1)(1 - x_2) + bx_1(1 - x_2) + c(1 - x_1)x_2 +  dx_1x_2 \\
& = a + (b - a)x_1 + (c - a)x_2 + (a + d - b - c)x_1x_2
\end{align} 
which is expressed in our notation as $w(\emptyset)=a$, $w(\{1\})=(b - a)$, $w(\{2\})=(c - a)$, and $w(\{1,2\})=(a - b - c + d)$.  
For Boolean VCSPs of bounded arity, the above can be repeated for each constraint to transform the VCSP-instance into our form in linear time (for the binary case specifically, see Theorem 3.4 in \cite{repCP}).
If all the scopes are at most binary (i.e., $\forall S \in \mathcal{S}, |S| \leq 2$) then we have a \emph{binary Boolean VCSP}.
We take the number of variables, $|V|$, as the size of the binary Boolean VCSP-instance.
We allow binary Boolean VCSP-instances to be restricted to induced subproblems:
\begin{definition}[Induced subproblem]
Given a binary Boolean VCSP-instance $\mathcal{C} = (V,\mathcal{S},w)$, any subset of variables $V' \subset V$, and any background assignment $y \in \{0,1\}^{V \setminus V'}$, let $\mathcal{S}' = \{S \cap V' | S \in \mathcal{S}\}$ be the set of scopes contained in $V'$, and define the weight function $w':\mathcal S'\to \mathbb{Z}\setminus \{0\}$ as follows:

\begin{samepage}
\begin{itemize}
    \item for $\{u,v\}\in \mathcal{S}'$: \quad$w'(\{u,v\}) =  w(\{u,v\})$,
    \item for $\{u\}\in \mathcal{S}'$ : \quad
    $w'(\{u\})  =  w(\{u\}) \ + \ \sum_{v \in V \setminus V' 
    \text{ s.t } \{u,v\} \in \mathcal{S}} w(\{u,v\}) y_v$,
    \item for $\emptyset \in \mathcal{S}'$: \quad $w'(\emptyset) = w(\emptyset) + \sum_{u \in V \setminus V' \text{ s.t } \{u\} \in \mathcal{S}} w(\{u\}) y_u + \sum_{u,v \in V \setminus V' \text{ s.t } \{u,v\} \in \mathcal{S}} w(\{u,v\}) y_u y_v$.~\footnote{
    To maintain non-zero range of the constraint weight function $w'$, we remove 
    $S$ from $\mathcal{S}'$ if $w'(S) = 0$.}
\end{itemize}
then $\mathcal{C}' = (V',\mathcal{S'},w')$ is the \emph{induced subproblem} of $\mathcal{C}$ on $V'$ in background $y$.
\end{samepage}
\label{def:effective-unary} 
\end{definition}

By choosing appropriate forms of valued constraints, binary Boolean VCSPs can easily express other previously studied problems like Hopfield nets~\cite{hakenSteepest}, Max-Cut~\cite{MaxCutDeg4,MaxCutSimple}, weighted 2-SAT~\cite{evoPLS}, and many others.
Further, by viewing the VCSP-instance $(V,\mathcal{S},w)$ as a weighted graph (also called a \emph{valued constraint graph}),
we can look at the structure of this graph as capturing the ``simplicity'' of the fitness landscape that it represents.
Thus, we can parameterize the resulting constraint graphs using standard (hyper)graph parameters like 
max-degree~\cite{MaxCutDeg4,MaxCutSimple}, 
pathwidth~\cite{tw7,pw4,slow-greed,pw2MSc,VCSPpw3_allExp}, 
or, as we will explore in this paper, treedepth and vertex cover number~\cite{td_chapter}.

\subsection{Prior constructions for longest, steepest, and shortest ascents}
\label{sec:chaingadgets}


Early recursive constructions like the Klee-Minty cube 
cannot be represented by sparse valued constraint graphs~\cite{tw7,arityBSc}. 
This motivated more recent work to build up VCSP-instances as a chain of iteratively connected gadgets.~\cite{hakenSteepest,MaxCutDeg4,MaxCutSimple,pw2MSc,slow-greed,VCSPpw3_allExp}
Each gadget is a constant-sized set of variables and `internal' constraints involving only those variables.
Some variables in the $k$th gadget are designated as `connectors' and these are the only variables involved in constraints to variables from the $(k + 1)$th gadget.
Repeating this process builds up a chain of $m$ gadgets.
Since both the number of variables and the number of edges is linear in $m$,
this results in sparse constraint graphs with the max-degree and pathwidth bounded by a constant. 

Broadly, there have been three kinds of local search intractability results for sparse VCSPs:
chain constructions where
(1) some ascent is exponential,
(2) the steepest ascent is exponential (i.e., an ascent taken by a greedy local search), or
(3) all ascents from a designated initial assignment are exponential.
Sometimes it is easier to state these in more concise language as the existence of an exponential (1) longest ascent, (2) steepest ascent, or (3) shortest ascent.
These three kinds of intractability results allow us to establish that a 
construction is not efficiently solvable by (1) some local search algorithms, (2) a particular algorithm like greedy local search, or (3) \emph{any} strict local search algorithm.
We give examples of some of these chain constructions in 
\cref{fig:previous-constructions}.
\begin{figure}[t]
    \centering

    \begin{subfigure}[t]{0.48\textwidth}
        \centering
        \includegraphics[width=\textwidth]{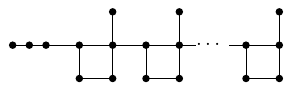}
        \caption{\textcite{MaxCutDeg4}.}
        \label{fig:MaxCutDeg4}
    \end{subfigure} 
    \hfill
    \begin{subfigure}[t]{0.48\textwidth}
        \centering
        \includegraphics[width=\textwidth]{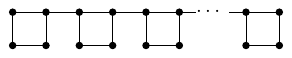}
        \caption{\textcite{repCP}.}
        \label{fig:repCP}
    \end{subfigure}

    \begin{subfigure}[t]{0.48\textwidth}
        \centering
        \includegraphics[width=\textwidth]{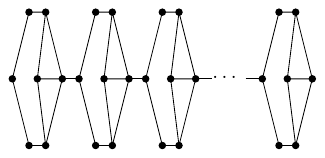}
        \caption{\textcite{hakenSteepest}.}
        \label{fig:hakenSteepest}
    \end{subfigure}
    \hfill
    \begin{subfigure}[t]{0.48\textwidth}
        \centering
        \includegraphics[width=\textwidth]{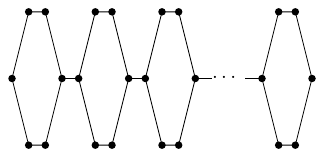}
        \caption{\textcite{slow-greed,pw2MSc}.}
        \label{fig:slow-greed}
    \end{subfigure}
    
    \begin{subfigure}[t]{0.48\textwidth}
        \centering
        \includegraphics[width=\textwidth]{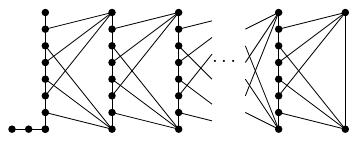}
        \caption{\textcite{MaxCutSimple}.}
        \label{fig:MaxCutSimple}
    \end{subfigure}
    \hfill
    \begin{subfigure}[t]{0.48\textwidth}
        \centering
        \includegraphics[width=\textwidth]{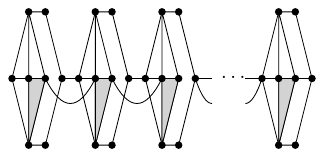}
        \caption{\textcite{VCSPpw3_allExp}.}
        \label{fig:allExp}
    \end{subfigure}
    \caption{Constraint graphs of previous chain constructions of Boolean VCSPs with exponentially long ascents for some ascent \textbf{(a,b)}, steepest ascent \textbf{(c,d)}, or all ascents \textbf{(e,f)}. 
    Constraint weights increase roughly exponentially from right to left in all constructions, with exact weights in the corresponding papers.}
    \label{fig:previous-constructions}
\end{figure}

For results of kind (1), \textcite{MaxCutDeg4} built Max-Cut-instances with an exponential ascent by chaining together gadgets on five variables with the connector variable participating in three internal constraints and one constraint to the next gadget (see \cref{fig:MaxCutDeg4}).
This resulted in a chain construction of a Max-Cut-Instance of maximum degree four.
By generalising from Max-Cut to general valued constraints, \textcite{repCP} showed that the fifth variable in each gadget could be eliminated, reducing the maximum degree of the construction to three while preserving the exponential ascent (see \cref{fig:repCP}).
Both of these constructions had pathwidth two.
This is the lowest possible pathwidth for an exponential ascent from a Boolean VCSP because \textcite{repCP} also showed that for all tree-structured Boolean VCSPs on $n$ variables (i.e., those with treewidth one), the length of any ascent is bounded by $\binom{n+1}{2}$.

For results of kind (2) of the particular ascent followed by greedy local search, \textcite{hakenSteepest} constructed an exponential steepest ascent from a chain of gadgets on seven variables each (see \cref{fig:hakenSteepest}); \textcite{effective-and-efficient} noted that this construction has pathwidth three.
\textcite{slow-greed} and \textcite{pw2MSc} reduced the pathwidth of a construction with an exponential steepest ascent to the lowest possible pathwidth of two (see \cref{fig:slow-greed}).

Finally, for results of kind (3) there has been a series of more complicated chain constructions with the property that all ascents from some designated initial assignment are exponentially long.
\textcite{MaxCutDeg4} claimed that their some-ascent-exponential construction (from \cref{fig:MaxCutDeg4}) can be converted to an all-ascents-exponential construction by growing the gadget from five variables to twenty-eight while preserving a maximum degree of four.
\textcite{MaxCutSimple} provided a simpler chain of Max-Cut gadgets of eight variables with max-degree four and all ascents exponential (see \cref{fig:MaxCutSimple}).
Both of these chain constructions had pathwidth $\geq 4$, but \textcite{VCSPpw3_allExp} 
have constructed a chain of ternary Boolean VCSP gadgets on eight variables that reduces the pathwidth to three.
It is not known if there are Boolean VCSPs of pathwidth two with exponential shortest ascents.

The chain constructions for exponential (1) longest, (2) steepest, and (3) shortest ascents often use gadgets that build on each other.
For example, the exponential \emph{steepest} ascent construction by \cite{slow-greed} and \cite{pw2MSc} is actually built from the chain construction of an exponential 
\emph{longest} ascent in \cite{td_short}.
Similarly, the construction of an exponential shortest ascent in \cite{VCSPpw3_allExp}
started with the six variable gadget for a chain construction with an exponential steepest ascent in \cite{slow-greed} and showed how to make the steepest ascent into the \emph{only} (and thus the shortest) ascent by introducing two ``control'' variables per gadget.
Thus, for the chain construction VCSPs, better gadgets for longest ascent led to better gadgets for steepest ascent, and finally to better gadgets for shortest ascent.
But this also meant that the constraint graphs of these constructions have been of about the same `simplicity' and so no structural graphs parameter has previously been identified for which there is a provable exponential separation between the longest and shortest ascents.

\subsection{Our contribution}

Although the chain construction VCSPs are simple, the fitness landscapes that they represent are built up from sublandscapes that are complicated along every dimension.
No matter how you divide these landscapes into a constant number of induced sublandscapes,
at least one of those sublandscapes will already 
have an exponential ascent.
Is this an essential feature of exponential ascents?
We show here that
this recursive complexity is not an essential feature:
it is possible to take as few as two simple sublandscapes and combine them into a complex one.
To show this, we switch from the chain constructions of prior work to a new kind of star construction.
Here we create the following constructions to lower bound the length of the longest and steepest ascents:\footnote{
Several instances of all five of these constructions have been simulated using Python software~\cite{CohenVCSPSimulator} to verify the existence of the longest and steepest ascents described.
}
\begin{description}
\item[\cref{ex:star_vcsp_quad_ascent}:] Star with a quadratic longest ascent;
\item[\cref{ex:star_of_gadgets}:] Star of gadgets of treedepth $3$ with an exponential longest ascent;
\item[\cref{ex:longest_star_graph_sa}:] Star with $2(n - 1)$ steepest ascent;
\item[\cref{ex:long_star}:] Starlike tree of treedepth $3$ with quadratic steepest ascent. 
\item[\cref{ex:vcn_lowerbound}:] Starlike hypergraph of vertex cover number $\log n$ with an exponential longest ascent.
\end{description}

\noindent In addition to the above lower bounds, we prove upper bounds for steepest and shortest ascents:
\begin{description}[resume]
\item[\cref{prop:star_sa_is_linear}:] Steepest ascents in stars have length of at most $2(n - 1)$;
\item[\cref{thm:short_trees}:] Shortest ascents in trees flip each variable at most once (so have length at most $n$);
\item[\cref{thm:short_td3}:] Shortest ascents in constraint graphs of treedepth $\leq 3$ have length at most $3n - 1$;
\item[\cref{thm:vcn_shortest_upperbound}:] Shortest ascents in constraint graphs of vertex cover number $\leq k$ have length at most $2^k(n - k)$.
\end{description}


\begin{table}
\centering
\begin{tabular}{l||c|c|c}
Valued Constraint Graph Class & Longest Ascent & Steepest Ascent & Shortest Ascent \\
\hline
unary & $n$ & $n$ & $n$  \\
star & $\Theta(n^2)$ & $2(n - 1)$ & $n$  \\
tree & $\binom{n + 1}{2}$ & $\Theta(n^2)$ & $n$  \\
treedepth $\leq 3$ & $2^{\Theta(n)}$ & $\Omega(n^2)$ & $\leq 3n - 1$  \\
vertex cover \# $\leq k$ & ? & ? & $\leq 2^{k}(n - k)$  \\
vertex cover \# $\leq \log n$ & $2^{\Theta(n)}$ & ? & $n^{O(1)}$\\
pathwidth $\leq 2$ & $2^{\Theta(n)}$ & $2^{\Theta(n)}$ & ?  \\
pathwidth $\leq 3$ & $2^{\Theta(n)}$ & $2^{\Theta(n)}$ & $2^{\Theta(n)}$
\end{tabular}
\caption{The worst-case length of the longest ascent, steepest ascent, or shortest ascent in fitness landscapes represented by Boolean VCSPs on $n$ variables with various classes of constraint graphs.}
\label{tab:PC_BooleanVCSP}
\end{table}

In \cref{sec:PCLS}, we discuss the consequences of this simplicity for the parameterized complexity of local search.
We show that our star of gadgets construction has treedepth three (\cref{prop:td3}) -- whereas all the prior chain constructions had treedepth $\Omega(\log n)$.
From this, we conclude that our star of gadgets construction is simpler than the prior state-of-the-art for the existance of long ascents.
We also show that unlike for chain gadget constructions, star gadget constructions cannot easily be modified from having one long ascent into having all ascents long.
Specifically, we show that the shortest ascent for Boolean VCSPs of treedepth three is linear (\cref{thm:short_td3}), giving the first exponential separation between longest and shortest ascent for a fixed graph parameter.
We conjecture that VCSPs of any constant treedepth have polynomial shortest ascents (\cref{conj:treedepth}).
We build evidence for our conjecture by showing that for VCSPs that have the more fine-grained parameter of vertex cover number of $k$, the shortest ascent is bounded by $2^k(n - k)$ (\cref{thm:vcn_shortest_upperbound}).
And provide a second exponential separation between longest and shortest ascent by providing a VCSP with logarithmic vertex cover number and an exponential longest ascent (\cref{ex:vcn_lowerbound}).
These exponential gaps between the shortest and longest ascents transforms the general question: ``is finding local peaks tractable for local search?''
Instead it becomes the question: ``will specific (classes of) local search algorithms end up following one of the short ascents or one of the long ones?''
In \cref{tab:PC_BooleanVCSP}, we summarize this burgeoning field of parametrized complexity of local search.

\section{Longest ascent: stars and windmills}
\label{sec:thickchain}

\textcite{repCP} proved a quadratic upper bound on the length of the longest ascent in any
fitness landscape represented by a tree of binary Boolean valued constraints.
Since stars are trees, this upper bound applies to stars.
We now give a matching lower bound for stars. 

\begin{figure}[!htb]
    \centering
    \begin{tikzpicture}
    \tikzset{state/.style={circle,fill=white,draw=black,text=black, style={minimum size=32pt}}}
    \tikzset{edgeQ/.style={midway, above, sloped}}
    \tikzset{edgeR/.style={near end, above, sloped}}
    \tikzset{edgeS/.style={near end, below, sloped}}
    
    \node[state, label=above:\(4m+1\)](X0){\tiny\(2m + 1\)};
    
    \node[state, label=above right:\(1\)](X1)[right=2cm of X0, yshift=4cm]{\tiny\(1\)};
    \node[state, label=above right:\(1\)](X2)[below=1cm of X1]{\tiny\(2\)};
    \node[below=1.5cm of X2]{\(\vdots\)};
    \node[state, label=above right:\(1\)](X2m)[below=3.5cm of X2]{\tiny\(2m\)};
    
    \node[state, label=above left:\(4m+3\)](X2m+2)[left=3cm of X0, yshift=5.0cm]{\tiny\(2m+2\)};
    \node[state, label=above left:\(4m+3\)](X2m+4)[below=0.7cm of X2m+2]{\tiny\(2m+4\)};
    \node[below=0.1cm of X2m+4]{\(\vdots\)};
    \node[state, label=above left:\(4m+3\)](X4m)[below=1.0cm of X2m+4]{\tiny\(4m\)};
    
    \node[state, label=below left:\(1\)](X2m+3)[left=3cm of X0, yshift=-0.5cm]{\tiny\(2m+3\)};
    \node[state, label=below left:\(1\)](X2m+5)[below=0.7cm of X2m+3]{\tiny\(2m+5\)};
    \node[below=0.1cm of X2m+5]{\(\vdots\)};
    \node[state, label=below left:\(1\)](Xm+k+l)[below=1.0cm of X2m+5]{\tiny\(4m+1\)};

     \draw [decorate,decoration={brace,amplitude=5pt,raise=4ex}]
    (4.5,4.5) -- (4.5,-3.0) node[midway, xshift=1.5cm]{\normalsize\(Q\)};
    \draw [decorate,decoration={brace,amplitude=5pt, mirror, raise=4ex}]
    (-6,5.5) -- (-6,0.5) node[midway, xshift=-1.5cm]{\normalsize\(S\)};
    \draw [decorate,decoration={brace,amplitude=5pt,mirror, raise=4ex}]
    (-6,0) -- (-6,-5.0) node[midway, xshift=-1.5cm]{\normalsize\(R\)};
    
    \path 
    (X0) edge node[edgeQ]  
      	{\(-2\)} (X1)
    (X0) edge node[edgeQ]  
      	{\(-2\)} (X2)
    (X0) edge node[edgeQ] 
      	{\(-2\)} (X2m)
       
    (X0) edge node[edgeR]
      	{\(-(4m+2)\)} (X2m+2)
    (X0) edge node[edgeR]
      	{\(-(4m+2)\)} (X2m+4)
    (X0) edge node[edgeR]
      	{\(-(4m+2)\)} (X4m)
       
    (X0) edge node[edgeS]
      	{\(4m+2\)} (X2m+3)
    (X0) edge node[edgeS]
      	{\(4m+2\)} (X2m+5)
    (X0) edge node[edgeS]
      	{\(4m+2\)} (Xm+k+l)
    ;
    \end{tikzpicture}
    \caption{A star-structured \VCSP-instance
    with an ascent of length $4m^2 + 8m$.
    }
    \label{fig:star_vcsp_quad_ascent}
\end{figure}

\begin{example}\label{ex:star_vcsp_quad_ascent}
Consider a star on $4m + 1$ variables.
We divide these variables into four sets, 
a set $Q = \{1, \ldots, 2m\}$ of $2m$ variables, one centre variable $c = 2m + 1$, 
and two sets $S = \{2m + 2, 2m + 4, \ldots, 4m - 2, 4m\}$ 
and $R = \{2m + 3, 2m + 5, \ldots, 4m - 1, 4m + 1\}$ of $m$ variables each.
We set the unary and binary constraints as:
\begin{equation*}
    w(\{u\}) = \begin{cases}
        4m + 1 & \text{if } u = c \\
        1 & \text{if } u \in Q \\
        4m + 3 & \text{if } u \in S \\
        1 & \text{if } u \in R
    \end{cases} \quad\quad\text{and}\quad\quad
    w(\{c,u\}) = \begin{cases}
    -2 & \text{if } u \in Q \\
    -(4m + 2) & \text{if } u \in S \\
    4m + 2 & \text{if } u \in R
    \end{cases}.
\end{equation*}
Consider an ascent on the fitness landscape represented by this VCSP-instance
starting from $x = 0^{4m + 1}$ and flipping the variable with the smallest index 
that allows an increase in fitness.
This will first flip variables in the following order: (1) all variables in $Q$, (2) $c$ followed by all the variables in $Q$, (3) alternating variables in $S$ and $R$, where each flip of a $u \in S \cup R$ is followed by a flip in $c$ which is then followed by flips in all the variables in $Q$.
The key observation being that in step (3), every flip of a variable in $S \cup R$ results in $2m + 1$ flips in the variables of lower index.
This ascent will increase fitness by $1$ at each step and end at the fitness peak $1^{2m + 1}0^m$ after $4m^2 + 8m$ steps.
\end{example}

\label{sec:construction}

We now show that it is possible to represent a landscape with an exponential ascent using a ``star of gadgets'' construction where the gadgets intersect in a single variable.
Specifically, we will build a ``windmill" graph 
where $2n$ triangles share a single common centre variable.

\begin{example}\label{ex:star_of_gadgets}
Consider the gadget $\mathcal{C}_k$ on the variables $\{c\} \cup V_k$ where $V_k=\{(1,k),(2,k),(3,k),(4,k)\}$ with five unary constraints and six binary constraints,
as shown in \cref{fig:basic-gadget}, with the weights below:
\begin{figure}[h]
    \centering
    \begin{tikzpicture}
    \def\radius{4.5} 

    \node[circle, draw, minimum size=25pt, label=above left:\scriptsize$11\cdot11^{k-1}$] (v-0) at (0:0cm) {\scriptsize$c$};
        
    \node[circle, draw, minimum size=15pt, label=right:\scriptsize$1\cdot 11^{k-1}$] (v-1-1) at (0:\radius cm) {\scriptsize$(1,k)$};
    \node[circle, draw, minimum size=15pt, label=above right:\scriptsize$1\cdot 11^{k-1}$] (v-2-1) at (30:\radius cm) {\scriptsize$(2,k)$};
    \node[circle, draw, minimum size=15pt, label=above right:\scriptsize$3\cdot 11^{k-1}$] (v-3-1) at (50:\radius cm) {\scriptsize$(3,k)$};
    \node[circle, draw, minimum size=15pt, label=above:\scriptsize$1\cdot 11^{k-1}$] (v-4-1) at (80:\radius cm) {\scriptsize$(4,k)$};

    \draw[black] (v-1-1) -- (v-2-1) node [midway, above, sloped] {\scriptsize$2\cdot 11^{k-1}$};
    \draw[black] (v-3-1) -- (v-4-1) node [midway, above, sloped] {\scriptsize$-2\cdot 11^{k-1}$};

    \draw[black] (v-1-1) -- (v-0) node [midway, above, sloped] {\scriptsize$-2\cdot 11^{k-1}$};
    \draw[black] (v-2-1) -- (v-0) node [midway, above, sloped] {\scriptsize$-4\cdot 11^{k-1}$};
    \draw[black] (v-3-1) -- (v-0) node [midway, above, sloped] {\scriptsize$-4\cdot 11^{k-1}$};
    \draw[black] (v-4-1) -- (v-0) node [midway, above, sloped] {\scriptsize$-2\cdot 11^{k-1}$};
\end{tikzpicture}
    \caption{Constraint graph of $\C_k$.
    Unary constraint weights appear near the nodes and binary constraint weights along the edges.}
    \label{fig:basic-gadget}
\end{figure}
\begin{align*}
    w(\{c\})          &=11\cdot11^{k-1} & w(\{(1,k)\})    &=11^{k-1}         & w(\{(2,k)\}) &= 11^{k-1}\\
                   &      & w(\{(3,k)\})    &=3\cdot 11^{k-1}  & w(\{(4,k)\}) &= 11^{k-1}\\
    w(\{c,(1,k)\})&=-2\cdot 11^{k-1} & w(\{c,(2,k)\}) &=-4\cdot 11^{k-1} & w(\{(1,k),(2,k)\})&=2\cdot 11^{k-1} \\
    w(\{c,(3,k)\})&=-4\cdot 11^{k-1} & w(\{c,(4,k)\}) &=-2\cdot 11^{k-1} & w(\{(3,k),(4,k)\})&=-2\cdot 11^{k-1}
\end{align*}
For any positive integer $n$, we construct a VCSP-instance $\C_{\leq n}$ 
on a set of $4n+1$ variables $V=\{c\} \cup \big( \bigcup _{k=1}^nV_k \big)$ as the union of all $\C_k$ for $k=1,2,\ldots,n$,
and set the unary constraint on the shared centre variable $c$ to $w(\{c\}) = 11^n$.
In other words, this common variable keeps only the weight from the final copy, $\C_n$.
The complete construction is illustrated in \cref{fig:triangle-flower}.
\begin{figure}[htb]
    \centering
    \begin{tikzpicture}
    \def\radius{4.5}

    \node[circle, draw, minimum size=25pt] (v-0) at (0:0cm) {\scriptsize$c$}; 
        
    \node[circle, draw, minimum size=15pt, label=right:\scriptsize$1$] (v-1-1) at (0:\radius cm) {\scriptsize$(1,1)$};
    \node[circle, draw, minimum size=15pt, label=above right:\scriptsize$1$] (v-2-1) at (30:\radius cm) {\scriptsize$(2,1)$};
    \node[circle, draw, minimum size=15pt, label=above right:\scriptsize$3$] (v-3-1) at (50:\radius cm) {\scriptsize$(3,1)$};
    \node[circle, draw, minimum size=15pt, label=above:\scriptsize$1$] (v-4-1) at (80:\radius cm) {\scriptsize$(4,1)$};
    
    \node[circle, draw, minimum size=15pt, label=above:\scriptsize$1\cdot11$] (v-1-2) at (110:\radius cm) {\scriptsize$(1,2)$};
    \node[circle, draw, minimum size=15pt, label=above left:\scriptsize$1\cdot11$] (v-2-2) at (140:\radius cm) {\scriptsize$(2,2)$};
    \node[circle, draw, minimum size=15pt, label=above left:\scriptsize$3\cdot11$] (v-3-2) at (160:\radius cm) {\scriptsize$(3,2)$};
    \node[circle, draw, minimum size=15pt, label=left:\scriptsize$1\cdot11$] (v-4-2) at (190:\radius cm) {\scriptsize$(4,2)$};

    \node[circle, draw, minimum size=15pt,label=below left:\scriptsize$1\cdot11^{n-1}$] (v-1-n) at (250:\radius cm) {\scriptsize$(1,n)$};
    \node[circle, draw, minimum size=15pt,label=below:\scriptsize$1\cdot11^{n-1}$] (v-2-n) at (280:\radius cm) {\scriptsize$(2,n)$};
    \node[circle, draw, minimum size=15pt,label=below right:\scriptsize$3\cdot11^{n-1}$] (v-3-n) at (300:\radius cm) {\scriptsize$(3,n)$};
    \node[circle, draw, minimum size=15pt,label=below right:\scriptsize$1\cdot11^{n-1}$] (v-4-n) at (330:\radius cm) {\scriptsize$(4,n)$};
    \path (v-4-2) -- (v-1-n) node [midway, sloped] {\Large\ldots};

    \draw[black] (v-1-1) -- (v-2-1) node [midway, above, sloped] {\scriptsize$2$};
    \draw[black] (v-3-1) -- (v-4-1) node [midway, above, sloped] {\scriptsize$-2$};

    \draw[black] (v-1-2) -- (v-2-2) node [midway, above, sloped] {\scriptsize$2\cdot 11$};
    \draw[black] (v-3-2) -- (v-4-2) node [midway, above, sloped] {\scriptsize$-2\cdot 11$};

    \draw[black] (v-1-n) -- (v-2-n) node [midway, above, sloped] {\scriptsize$2\cdot 11^{n-1}$};
    \draw[black] (v-3-n) -- (v-4-n) node [midway, above, sloped] {\scriptsize$-2\cdot 11^{n-1}$};

    \draw[black] (v-1-1) -- (v-0) node [midway, above, sloped] {\scriptsize$-2$};
    \draw[black] (v-2-1) -- (v-0) node [midway, above, sloped] {\scriptsize$-4$};
    \draw[black] (v-3-1) -- (v-0) node [midway, above, sloped] {\scriptsize$-4$};
    \draw[black] (v-4-1) -- (v-0) node [midway, above, sloped] {\scriptsize$-2$};

    \draw[black] (v-1-2) -- (v-0) node [midway, above, sloped] {\scriptsize$-2\cdot 11$};
    \draw[black] (v-2-2) -- (v-0) node [midway, above, sloped] {\scriptsize$-4\cdot 11$};
    \draw[black] (v-3-2) -- (v-0) node [midway, above, sloped] {\scriptsize$-4\cdot 11$};
    \draw[black] (v-4-2) -- (v-0) node [midway, above, sloped] {\scriptsize$-2\cdot 11$};

    \draw[black] (v-1-n) -- (v-0) node [midway, above, sloped] {\scriptsize$-2\cdot 11^{n-1}$};
    \draw[black] (v-2-n) -- (v-0) node [midway, above, sloped] {\scriptsize$-4\cdot 11^{n-1}$};
    \draw[black] (v-3-n) -- (v-0) node [midway, above, sloped] {\scriptsize$-4\cdot 11^{n-1}$};
    \draw[black] (v-4-n) -- (v-0) node [midway, above, sloped] {\scriptsize$-2\cdot 11^{n-1}$};

    \node[fill=white] (label) at (85:0.7cm) {\scriptsize$11^n$};
\end{tikzpicture}
    \caption{Constraint graph of $\C_{\leq n}$. 
    Unary constraint weights appear next to the nodes and binary constraint weights along the edges.}
    \label{fig:triangle-flower}
\end{figure}
\end{example}
\noindent The fitness landscape of the basic gadget $\C_{k}$ has an ascent of length 11, 
from assignment $00000$ to assignment $10000$, as shown in Table~\ref{tab:simple-gadget-ascent}.
Note that the centre variable $c$ is flipped twice from $0$ to $1$ at steps \protect\circled{$4$} and \protect\circled{$8$} with variables in $V_k$ set to $y^{\scriptsize\circled{$4$}} = 0111$ and $y^{\scriptsize\circled{$8$}} = 1110$.
Looking at the induced subproblems on just the centre variable $\mathcal{C}'_k = (\{c\},\{\{c\}\},w')$ with background set to $y^{\scriptsize\circled{4}}$ 
and $\mathcal{C}''_k = (\{c\},\{\{c\}\},w'')$  with background set to $y^{\scriptsize\circled{8}}$, the weights are chosen such that the induced unary weight function on the centre variable is given by:
\begin{align}
w'(\{c\}) & = w(\{c\}) + w(\{c,(2,k)\}) + w(\{c,(3,k)\}) + w(\{c,(4,k)\}) \\
    & = (11 - 4 - 4 - 2)\cdot11^{k - 1} = 11^{k - 1} \label{eq:w1}\\
w''(\{cx\}) & = w(\{c\}) + w(\{c,(1,k)\}) + w(\{c,(2,k)\}) + w(\{c,(3,k)\}) \\
    & = (11 - 2 - 4 - 4)\cdot11^{k - 1} = 11^{k - 1} \label{eq:w2}
\end{align}
which is the same weight as $\mathcal{C}_{k - 1}$ (and thus also $\mathcal{C}_{\leq k - 1}$) places on $\{c\}$.
We will use this property at steps \protect\circled{$4$} and \protect\circled{$8$} to recursively prove an exponential ascent in $\mathcal{C}_{\leq n}$:

\begin{table}[htb]
    \centering
    \begin{tabular}{c|ccccc|c}
         Step & ~~$x_c$~~ & $x_{(1,k)}$ & $x_{(2,k)}$ & $x_{(3,k)}$ & $x_{(4,k)}$ & $f(x)/11^{k-1}$ \\
         \hline
            & 0 & 0 & 0 & 0 & 0 & 0 \\
         \protect\circled{$1$}  & 0 & 0 & 0 & 0 & 1 & 1 \\
         \circled{$2$}  & 0 & 0 & 0 & 1 & 1 & 2  \\
         \circled{$3$}  & 0 &0 &1 &1 &1 &3 \\
         \hline
         \circled{$4$}  & 1 &0 &1 &1 &1 &4 \\
         \circled{$5$}  & 1 &1 & 1 &1 &1 &5 \\
         \circled{$6$}  & 0 & 1 & 1 &1 &1 &6 \\
         \circled{$7$}  & 0 &1 &1 &1 &0 &7 \\
         \hline
         \circled{$8$}  & 1 &1 &1 &1 &0 &8 \\
         \circled{$9$}  & 1 &1 &1 &0 &0 &9 \\
         \circledT{$10$}  & 1 &1 &0 &0 &0 &10 \\
         \circledT{$11$}  & 1& 0 &0 &0 &0 &11 \\
    \end{tabular}
    \caption{An 11 step ascent  in the fitness landscape represented by $\C_{k}$.}
    \label{tab:simple-gadget-ascent}
\end{table}

\begin{theorem}\label{thm:exponential-ascent}
    The fitness landscape of $\C_{\leq n}$ from \cref{ex:star_of_gadgets} has an ascent of length $10\cdot2^n-9$. 
\end{theorem}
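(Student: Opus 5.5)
We argue by induction on $n$, proving the stronger statement that $\C_{\leq n}$ has an ascent of length $L(n)=10\cdot 2^n-9$ that starts at the all-zero assignment and ends with $x_c=1$ and every other variable equal to $0$. The length recursion is $L(n)=2L(n-1)+9$: the gadget $\C_n$ contributes its $11$ steps from \cref{tab:simple-gadget-ascent}, except that each of its two $c$-flips (steps \circled{$4$} and \circled{$8$}) is replaced by a full ascent of $\C_{\leq n-1}$. That gives $11-2+2L(n-1)$ steps, and $L(0)=1$ (the single flip of $c$, whose unary weight is then $11^0=1$) yields $L(n)=10\cdot 2^n-9$. It is cleanest to take this $n=0$ case as the base; alternatively one checks $L(1)=11$ directly from \cref{tab:simple-gadget-ascent}.

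\textbf{Key lemma.} Suppose the variables of $V_n$ are held fixed at a background $y\in\{y^{\circled{4}},y^{\circled{8}}\}$. By \cref{def:effective-unary} together with equations (\ref{eq:w1})--(\ref{eq:w2}), the induced subproblem on $\{c\}\cup\bigcup_{k<n}V_k$ is then exactly $\C_{\leq n-1}$ up to an additive constant. The reason is that the unary weight on $c$ becomes $11^{n-1}$, which is precisely the weight $\C_{\leq n-1}$ places on $c$, and no other constraint involves $V_n$. Since adding a constant does not change which flips are improving, any ascent of $\C_{\leq n-1}$ lifts to a sequence of strictly improving flips in $\C_{\leq n}$, provided $V_n$ stays fixed.

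\textbf{Assembling the ascent.} With $c=0$ and $V_{<n}=0$, we perform steps \circled{$1$}--\circled{$3$} of the gadget. Each flip only involves constraints inside $V_n\cup\{c\}$, and the fitness gains match the table (the other gadgets contribute nothing because $c=0$). We then run the inductive ascent of $\C_{\leq n-1}$ in background $y^{\circled{4}}$. By the hypothesis it ends with $c=1$ and $V_{<n}=0$, which is exactly the table's state after step \circled{$4$}. Steps \circled{$5$}, \circled{$6$} (which flips $c$ back to $0$) and \circled{$7$} come next. These need care, because with $c=1$ the flips of $V_n$ variables interact only with $c$ and $V_n$, and $V_{<n}$ is all zero, so their gains are exactly the gadget's. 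For step \circled{$6$} we verify that flipping $c$ to $0$ is improving in the full instance; since $V_{<n}=0$, the only contributions are $w(\{c\})=11^n$ and the $c$-edges into $V_n$, so it matches the table. With $c=0$ and $V_{<n}=0$ again, we rerun the inductive ascent in background $y^{\circled{8}}$ and then finish with steps \circled{$9$}--\circledT{$11$}. The endpoint is $c=1$, all else $0$, which restores the invariant.

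\textbf{Main obstacle.} The delicate point is making sure every reused gadget step is strictly improving in the full instance, not just in $\C_n$ alone. This reduces to checking that all variables in $V_{<n}$ are $0$ whenever $c$ or $V_n$ is flipped. That is exactly why the strengthened endpoint $c=1$, $V_{<n}=0$ is included in the induction hypothesis. The weight-matching identities (\ref{eq:w1})--(\ref{eq:w2}) settle the recursive calls, and the constant shift in fitness is harmless.
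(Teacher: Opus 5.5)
Your proposal is correct and follows the paper's proof essentially verbatim. It uses the same strengthened induction from $0(0000)^{n}$ to $1(0000)^{n}$ with base case $n=0$, and the same use of \eqref{eq:w1}--\eqref{eq:w2} to identify the induced subproblem at gadget steps 4 and 8 with $\C_{\leq n-1}$. The count $3+L(n-1)+3+L(n-1)+3$ is also the paper's. You are in fact slightly more careful than the paper: you track the additive constant, and you check that the reused gadget flips are improving in the full instance.

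One small point is missing from both your argument and the paper's. The paper's definition of an ascent requires the endpoint $1(0000)^n$ to be a local peak, and neither of you checks this. It is immediate: $w(\{c\})=11^n>0$, and every non-centre variable satisfies $w(\{(i,k)\})+w(\{c,(i,k)\})<0$.
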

\begin{proof}
    By induction on $n$, we will show that, for each value of $n \geq 0$, 
    there is an ascent in the fitness landscape of $\C_{\leq n}$ of length $10\cdot2^n-9$ 
    that starts at the assignment $0(0000)^{n}$, and ends at $1(0000)^{n}$.

    \noindent \textbf{Base Case:} 
    For $n=0$ we have $\mathcal{C}_{\leq 0}$ on the centre variable 
    $\{c\}$ with $w(\{c\}) = 1$, so there exists an ascent of length one from $0$ to $1$.
    
    Assume by induction that the result holds for $\C_{\leq k - 1}$,
    and look at an ascent in the landscape represented by $\C_{\leq k}$:

    
    \noindent \textbf{Steps \protect\circled{$1$},\protect\circled{$2$} and \protect\circled{$3$}:} 
    Take the initial assignment $0 (0000)^{k}$, and take the first three steps (\protect\circled{$1$},\protect\circled{$2$} and \protect\circled{$3$}) of the ascent shown in \cref{tab:simple-gadget-ascent} by flipping the corresponding variables of $\C_{k}$
    (i.e., flip variable $(4,k)$ to $1$, then variable $(3,k)$ to $1$, then variable $(2,k)$ to $1$).
    This takes us to the assignment $0(0000)^{k-1}{0111}$.

    \noindent \textbf{Step \protect\circled{$4$}:} 
    Now consider the induced subproblem $\C'_{\leq k-1}$ on the variables $V_{\leq k-1}$ with background $y = {0111}$ on $V_k$. 
    By \cref{def:effective-unary,eq:w1}, the unary constraint on the centre variable $c$ in $\C'_{\leq k-1}$ is equal to $w'(\{c\}) = 11^{k-1}$.
    Hence, $\C'_{\leq k-1} = \C_{\leq k-1}$, so by the inductive hypothesis we can make an ascent of length $10\cdot2^{k-1}-9$ from $0(0000)^{k-1}{0111}$ to $1(0000)^{k-1}{0111}$ by flipping variables in $V_{\leq k-1}$ in some suitable order.  
    These flips have the effect in $\C_{k}$ of performing step \protect\circled{$4$}.
    
    \noindent \textbf{Steps \protect\circled{$5$},\protect\circled{$6$} and \protect\circled{$7$}:} 
    At this point we fix the variables in $V_{\leq k-1}$ and continue with the ascent shown in Table~\ref{tab:simple-gadget-ascent} for the three steps \protect\circled{$5$},\protect\circled{$6$} and \protect\circled{$7$} by flipping the corresponding variables of $\C_{k}$
    (i.e., flip variable $(1,k)$ to $1$, then variable $c$ to $0$, then variable $(4,k)$ to $0$).
    This takes us to the assignment $0(0000)^{k-1}{1110}$.

    \noindent \textbf{Step \protect\circled{$8$}:} Now consider the induced subproblem $\C''_{\leq k-1}$ on the variables of $\C_{\leq k-1}$ with background $y = {1110}$. 
    By \cref{def:effective-unary,eq:w2}, the unary constraint on the centre variable $c$ in $\C''_{\leq k-1}$ is equal to $w''(\{c\}) = 11^{k-1}$.
    Hence, $\C''_{\leq k - 1} = \C_{\leq k - 1}$, so by the inductive hypothesis we can make an ascent of length $10\cdot2^{k-1}-9$ from $0(0000)^{k-1}{1110}$ to $1(0000)^{k-1}{1110}$ by flipping variables of $\C_{\leq k-1}$ in some suitable order.
    These flips have the effect in $\C_{k}$ of performing step \protect\circled{$8$}.
    
    \noindent \textbf{Steps \protect\circled{$9$},\protect\circledT{$10$} and \protect\circledT{$11$}:}
    At this point we fix the variables in $\C_{\leq k-1}$ and continue with the ascent shown in Table~\ref{tab:simple-gadget-ascent} for the three steps \protect\circled{$9$},\protect\circledT{$10$} and \protect\circledT{$11$} by flipping the corresponding variables of $\C_{k}$.  
    (i.e., flip variable $(3,k)$ to $0$, then variable $(2,k)$ to $0$, then variable $(1,k)$ to $0$).
    This completes the 11 step ascent and takes us to the assignment $1(0000)^{k-1}{0000}$.

    The total length of the ascent we have just described is $3 + (10\cdot2^{k-1}-9) + 3 + (10\cdot2^{k-1}-9) + 3$ which equals $10\cdot2^{k}-9$.
    Hence, by induction, the result follows for all values of $n$.
\end{proof}

\section{Steepest ascent: stars and starlike trees}
\label{sec:steepest_ascent}

We will show that the length of a steepest ascent on the landscape represented by any Boolean VCSP instance on \(n\) variables whose constraint graph is a star is at most \(2(n-1)\).
Before proving this fact, we begin by providing a simple example that saturates this bound.

\begin{example}\label{ex:longest_star_graph_sa}
\textbf{(Star-VCSP with linear steepest ascent)}
    We define a VCSP instance over \(n\) Boolean valuables \(1\dots,n\)
    as shown in \cref{fig:steepest_star}.
    We set variable \(1\) to be the centre of the star.
    We set the unary and binary constraint as 
    \begin{equation*}
        w(\{u\}) = 
        \begin{cases}
        1 & \text{if } u \in \{1, n\} \\
        2 & \text{if } u \in \{2, \dots, n-1\}
        \end{cases}\quad\;\text{and}\quad\;
        w(\{1,u\}) = 
        \begin{cases}
        -3 & \text{if } u \in  \{2, \dots, n-1\}\\
        3(n - 2) & \text{if } u = n
        \end{cases}.
    \end{equation*}
    
\begin{figure}[hbt]
\centering
\begin{tikzpicture}[-, >=stealth', shorten >=1pt, auto, node distance=3cm]
    \tikzset{state/.style={circle, draw, minimum size=26pt}}
    \tikzset{weightNode/.style={midway, above, sloped}}
    \def\radius{2.5} 
    \def\angle{22.5} 

    \node[state, label=above: \scriptsize\(1\)] (v0) {\scriptsize$1$};
    \node[state, label=above right:\scriptsize\(2\)] (v1) at (2*\angle:\radius * 1.25cm){\scriptsize$2$};
    \node[state, label=above right:\scriptsize\(2\)] (v2) at (\angle:\radius cm){\scriptsize$3$};
    \node[state,draw=white] (dots) at (0:\radius * 0.95cm){$\vdots$};
    \node[state, label=right: \scriptsize\(2\)] (vK) at (-\angle:\radius cm){\scriptsize${n-1}$};
    \node[state, label=above left: \scriptsize\(1\)] (vK+1) at (180:\radius cm) {\scriptsize${n}$};

    \draw[black] (v0) -- (v1) node[weightNode]{\scriptsize\(-3\)};
    \draw[black] (v0) -- (v2) node[weightNode]{\scriptsize\(-3\)};
    \draw[black] (v0) -- (vK) node[weightNode]{\scriptsize\(-3\)};
    \draw[black] (v0) -- (vK+1) node[weightNode]{\scriptsize\(3(n-2)\)};
    
\end{tikzpicture}
\caption{Star of constraints with a steepest ascent of length $2(n-1)$.}
\label{fig:steepest_star}
\end{figure}
Consider the steepest ascent starting at \(0^{n}\). 
Since the centre variable \(1\) starts at value \(0\), none of the binary constraints come into play. 
Thus, the first step consists of one of the leaves in the variable set \(\{2, \dots, n-1\}\) flipping to \(1\), since these have the largest unary constraint.
In the first \(n-2\) steps, all these leaves flip to \(1\).
Finally, leaf variable \(n\) flips to \(1\).

After all the leaf variables have flipped to \(1\), the induced unary on the centre variable is equal to
\(
    w(\{1\})+\sum_{i=2}^nw(\{1,i\})=1+3(n-2)-3(n-2)=1.
\)
Thus, the centre flips to \(1\) at this point.
After this flip of the centre, each leaf variable in the set $\{2,\ldots,n-1\}$ 
has induced unary $w(\{i\})+w(\{1,i\})=-1$.
Since both the centre variable and variable \(n\) have a positive induced unary and are set to \(1\), all of the variables in \(i\in\{2, \dots, n-1\}\) will now flip back to \(0\), at which point a local maximum has been reached by the steepest ascent in a total of \(2(n-1)\) steps.
\end{example}

The steepest ascent in \cref{ex:longest_star_graph_sa} is the maximum length possible in stars:

\begin{theorem}\label{prop:star_sa_is_linear}
If $\mathcal{C}$ is a Boolean VCSP instance on \(n\) variables whose constraint graph is a star, 
then any steepest ascent on \(\mathcal{C}\) will have length at most \(2(n-1)\).
\end{theorem}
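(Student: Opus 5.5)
The plan is to exploit the fact that in a star every leaf $u$ interacts only with the centre $c$. Write $\Delta_u(x)$ for the fitness gain of flipping $u$ at $x$, and $\Delta_c(x)$ for the gain of flipping the centre. By \cref{def:effective-unary}, $\Delta_u(x) = \pm\bigl(w(\{u\}) + w(\{c,u\})x_c\bigr)$, so the gain of a leaf depends only on $x_c$ and on $x_u$. Two consequences follow.

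\textbf{(i) Between centre flips, each leaf flips at most once.} Once $u$ flips, $\Delta_u$ becomes negative. It can only become positive again if $x_c$ changes.

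\textbf{(ii) A leaf flip changes the centre's gain by a fixed amount.} Flipping leaf $u$ changes $\Delta_c$ by exactly $\pm w(\{c,u\})$. This is the same interaction term by which a centre flip changes $\Delta_u$. Concretely, for the pair $u,c$ we have the exchange identity
\[
\Delta_c(x) + \Delta_u(x \oplus c) = \Delta_u(x) + \Delta_c(x \oplus u).
\]

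The core of the proof is a structural lemma: \emph{in a steepest ascent, the centre flips at most once.} Suppose the centre flips at time $t$ from value $a$ to $b$ with gain $g = \Delta_c(x^t) > 0$. Steepest ascent chose $c$, so every leaf had $\Delta_u(x^t) \le g$. After time $t$ only leaves flip, at least until $c$ could flip back, and by (i) each leaf flips at most once. Let $T$ be the set of leaves that flip after time $t$. By the additivity in (ii), the gain for $c$ to return to $a$ equals $-g - \sum_{u\in T} \eta_u$, where $\eta_u = \Delta_u(x^t\oplus c) - \Delta_u(x^t)$ is the interaction term. We need to show this quantity stays negative. For each $u \in T$ we know $\Delta_u(x^t \oplus c) > 0$ and $\Delta_u(x^t) \le g$. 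The leaves with $\Delta_u(x^t) > 0$ would also have been improving before the centre flip, and had gains at most $g$. I would split $T$ by the sign of $\Delta_u(x^t)$:
\begin{itemize}
\item if $\Delta_u(x^t) \le 0$, then $\eta_u > 0$, which only helps;
\item if $0 < \Delta_u(x^t) \le g$ and $\eta_u < 0$, then $u$ is improving both before and after the centre flip.
\end{itemize}
I expect this lemma to be the main obstacle. The worry is that several leaves of the second kind might together accumulate enough negative interaction to exceed $g$. The first idea I would try is to show that such a leaf cannot exist at all: its interaction term pushes the centre toward $a$ with magnitude $|\eta_u| < \Delta_u(x^t) \le g$, so it should have flipped before the centre in a steepest ascent. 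Failing that, I would try a telescoping/potential argument on $\Delta_c$ along the post-flip segment, comparing each leaf's gain with the current centre gain at the moment that leaf is chosen.

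Given the lemma, the counting is straightforward. If the centre never flips, then by (i) each of the $n-1$ leaves flips at most once, so the ascent has length at most $n-1$. If the centre flips exactly once, the ascent consists of at most $n-1$ leaf flips before it, one centre flip, and at most $n-1$ leaf flips after it. To reach $2(n-1)$ rather than $2n-1$, we must show that at least one leaf cannot flip both before and after the centre flip. Consider the last leaf $v$ flipped before time $t$; if there is none, the pre-phase is empty and the bound is immediate. At the moment $v$ is chosen, $\Delta_v \ge \Delta_c$. Applying the exchange identity to the pair $(v,c)$, together with the fact that $c$ is improving immediately afterward, I will show that $v$'s gain after the centre flip is negative, so $v$ does not flip again. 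Combining the two cases gives at most $(n-1) + 1 + (n-2) = 2(n-1)$ steps, which \cref{ex:longest_star_graph_sa} shows is tight.
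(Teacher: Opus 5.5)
Your structural lemma is false, and it breaks exactly where you flagged the main obstacle.

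\emph{Counterexample to the lemma.} Take the star with centre $c$ and two leaves $u_1,u_2$, with $w(\{c\})=5$, $w(\{u_i\})=4$ and $w(\{c,u_i\})=-3$. Writing assignments as $x_c x_{u_1} x_{u_2}$, the steepest ascent from $000$ is $000\to100\to110\to111\to011$, with fitness $0,5,6,7,8$. At $000$ the centre's gain $5$ beats each leaf's gain $4$. After that each leaf gains $1$, and the tie between the two leaves is irrelevant. Once both leaves are set, flipping the centre back gains $6-5=1$, and $011$ is a local peak. So the centre flips twice, even though the length $4=2(n-1)$ respects the theorem.

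\emph{Why your argument for it fails.} You argue that a leaf with $0<\Delta_u(x^t)\le g$ and $\eta_u<0$ ``should have flipped before the centre'', but nothing supports this. Steepest ascent takes the largest gain, which here is the centre's, so such leaves are legitimately passed over. Each has $|\eta_u|=3<g=5$, yet two of them together push the centre back. Your count $(n-1)+1+(n-2)$ rests entirely on this lemma, so ascents in which the centre flips repeatedly are not covered. No potential argument can rescue the lemma.

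\emph{What is missing.} You need a charging argument that tolerates any number of centre flips $t_1<\dots<t_h$, which is how the paper proceeds.
\begin{itemize}
\item For the block between $t_r$ and $t_{r+1}$, relabel so that at time $t_r-1$ the centre is $0$ and the leaves at $1$ are exactly the set $Q_r$ of leaves that flipped before $t_r$ and flip again in this block. Each $u\in Q_r$ then has $0<w(\{u\})<-w(\{c,u\})$.
\item Let $S_r$ be the leaves with $w(\{c,u\})<0$ that flip $0\to1$ in the block. These have $w(\{u\})>0$ and $w(\{u\})+w(\{c,u\})>0$, so they flip exactly once in the whole ascent.
\item Let $\delta$ be the minimum of the centre's gain at $t_r$ and the values $w(\{u\})$ for $u\in Q_r$. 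Steepestness at $t_r$ gives $w(\{c\})\ge(|Q_r|+1)\delta$.
\item Each $u\in S_r$ was improving but passed over at step $t_r$ and at the last flips of the $Q_r$ leaves, so $-w(\{c,u\})<w(\{u\})\le\delta$.
\item The centre's flip back at $t_{r+1}$ then forces $w(\{c\})<|S_r|\delta$, hence $|S_r|>|Q_r|+1$.
\end{itemize}
This turns your per-leaf bound $|\eta_u|<g$ into a count of how many such leaves are needed, rather than trying to rule them out. The block's re-flips and its closing centre flip $t_{r+1}$ are then charged to $S_r$.

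\emph{What survives.} Your exchange-identity argument for the leaf $v$ flipped just before $t_1$ is sound and still useful. It actually shows $w(\{v\})>0$ and $w(\{v\})+w(\{c,v\})>0$ in the frame $x^0=0$, so $v$ never flips again however often the centre moves. That leaf pays for $t_1$. The re-flips after $t_h$ are made by distinct leaves outside $\{v\}\cup\bigcup_r S_r$. Hence centre flips plus re-flips number at most $n-1$, with a small separate check when no leaf flips before $t_1$. Adding at most $n-1$ first flips gives the bound $2(n-1)$.
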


\begin{proof}
Name the centre variable $c$ and the set of leaf variables $V$,
consider any steepest ascent \(x^0, \dots, x^T\) in the fitness landscapes implemented by \(\mathcal{C}\).
If $c$ never flips then each leaf can flip at most once and the length of the ascent is $\leq n - 1$.
Now consider the case where $c$ does flip:
\\
\noindent \textbf{To analyse the first flip of $c$}: 
relabel the domains of $c$ and $V$ so that we can have $x^0 = 0^{n}$. 
Let $t_1$ be the time that $c$ flipped from $0$ to $1$ (i.e., $x^{t_1 - 1}_c = 0$ and $x^{t_1}_c = 1$) and let $V_\text{in}$ be the set of leaves that flipped before $t_1$ with the first flip coming from $u \in V_\text{in}$.
We have that:
\begin{equation}
    w(\{c\}) \ \leq \ w(\{u\}) \ \leq \ \sum_{v \in V_\text{in}} w(\{v\})
    \label{eq:wc_first_upperbound}
\end{equation}
where the first inequality comes from the steepest property and the second inequality comes from $u \in V_\text{in}$ and from $\forall v \in V_\text{in}$ being able to flip with the center at zero (i.e., $w(\{v\}) > 0$).
At time $t_1-1$, we have the induced unary on $c$ as:
\begin{equation}
0 \ < \ w(\{c\}) + \sum_{v \in V_\text{in}}w(\{c,v\}) \ \leq\  \sum_{v \in V_\text{in}} \Bigg( w(\{v\}) + w(\{c,v\}) \Bigg )
\label{eq:Vin_bound}
\end{equation}
where the first inequality comes from the centre being able to flip at step $t_1$ and the second inequality from \cref{eq:wc_first_upperbound}.
For \cref{eq:Vin_bound} to be satisfied, we need at least one $w \in V_\text{in}$ with $w(\{w\})  + w(\{c,w\}) > 0$ and thus not be able to flip down to $0$ after time $t_1$.
Thus, at least one leaf flips at most once and it must flip before the centre does. 
So if $|V_\text{out}|$ is the number of leaf flips after the last flip of the centre then we must have $|V_\text{out}| \leq |V_\text{in}| - 1$.
%
\\
\noindent \textbf{To analyze subsequent flips of $c$}: 
let \(t_1, \dots, t_h\) be the times at which \(c\) flips.
Now consider any $r < h$ and let $Q$ be the subset of leaf variables that have flipped before $t_r$ and also flip between steps $t_r$ and $t_{r + 1}$.
Relabel the domains of all variables such that at time $t_r - 1$, the center variable is zero, all variables in $Q$ are $1$, and the other leaf variables are $0$.
In other words, such that $x^{t_r - 1} = 01^Q 0^{V \setminus Q}$.
Based on this labelling consider a $u \in Q$, since $u$ flips from $1$ to $0$ when $x_c = 1$, we have $w(\{u\}) + w(\{c,u\}) < 0$ and since it previously flipped from $0$ to $1$, it must have done it when $x_c$ was zero, so $0 < w(\{u\}) < -w(\{c,u\})$.
Let $\delta = \min ( w(\{c\}) + \sum_{u \in Q} w(\{c,u\}), \min_{u \in Q } w(\{u\}))$.
We will use the step size of the $t_r$th step of the steepest ascent to lower bound the size the unary on $c$ in this labeling:
\begin{align}
\mathcal{C}(x^{t_r}) - \mathcal{C}(x^{t_r - 1}) & = w(\{c\}) + \sum_{u \in Q} w(\{c,u\})   \geq \delta \\
w(\{c\}) & \geq \delta -\sum_{u \in Q} w(\{c,u\}) > \delta + \sum_{u \in Q} w(\{u\}) \geq (|Q| + 1)\delta
\label{eq:wc_lowerbound}
\end{align}

Let $S$ be a subset of leaf variables such that $u \in S$ if $w(\{c,u\}) < 0$ and $u$ flips from $0$ to $1$ between steps $t_r$ and $t_{r + 1}$.
Since $u \in S$ flipped from $0$ to $1$ while $x_c = 1$, this means that $w(\{u\}) + w(\{c,u\}) > 0$ so
these variables always prefer the $x_u = 1$ assignment independent of $x_c$ and so will only flip once during any ascent.
Since $\delta$ is an upper bound on the smallest step size that the steepest ascent has on the $x_c = 0$ induced sublandscape before step $t_r$ and $u \in S$ did not flip before $t_r$, it means that for all $u \in S$ we have $\delta \geq w(\{u\}) \geq -w(\{c,u\})$.

For all other variables $v \in V \setminus S$ that flipped from $0$ to $1$ between $t_r$ and $t_{r  +1}$, we have $w(\{c,v\}) > 0$.
Now we can use the $t_{r+1}$th step of the steepest ascent to upper bound the size of the unary on $c$:
\begin{align}
0 < \mathcal{C}(x^{t_{r + 1}}) - \mathcal{C}(x^{t_{r + 1} - 1}) & \leq -w(\{c\}) - \sum_{u \in S} w(\{c,u\}) \\
 w(\{c\}) & < -\sum_{u \in S} w(\{c,u\}) \leq \sum_{u \in S} w(\{u\}) \leq |S|\delta
\label{eq:wc_upperbound}
\end{align}

\noindent Combining \cref{eq:wc_lowerbound} and \cref{eq:wc_upperbound}, we get $|Q| + 1 < |S|$.

Since this holds for each $r < h$, we can let $q_r$ be the set of extra flips of leaf variables in block $r$ and $S_r$ be the set of variables that flip for the only time in block $r$.
From this, we have that the total number of flips is:

\begin{align}
    \underbrace{|V_\text{in}| + 1}_\text{first flip of leaves and centre} + \sum_{r = 1}^{h - 1} \underbrace{\Bigg (|S_r| + q_r + 1 \Bigg)}_{\text{flips of leaves and centre in block } r} + \underbrace{|V_\text{out}|}_{\text{flips of leaves after } t_h} \\
    \ \leq \ 2(|V_\text{in}| + \sum_{r = 1}^{h - 1} |S_r| ) 
    \ \leq \ 2(n - 1)
\end{align}
\noindent where the last inequality follows from the total number of leaves being greater or equal than the number of leaves that flipped before the first center flip $V_\text{in}$ and the number of leaves that only ever flip during a block $r$.
\end{proof}

Our proof of the linear bound in \cref{prop:star_sa_is_linear}
depended on the fact that some of the leaf variables
have to have a small weight to prevent them flipping 
too soon in a steepest ascent. 
In \cref{ex:long_star}, we show that by adding 
additional ``unlocking" variables connected to some of the leaves
we can achieve a quadratic steepest ascent similar to the longest ascent in \cref{ex:star_vcsp_quad_ascent}:

\begin{figure}[!htb]
    \centering
    \scriptsize{
    \begin{tikzpicture}[-,>=stealth',shorten >=1pt,auto,node distance=2.5cm, semithick]
    \tikzstyle{every state}=[fill=white,draw=black,text=black, style={minimum size=26pt}]
    \node[state, label=above:\(2n+2\)](X0){\tiny\(c\)};
    \node[state, label=above:\(2n+3\)](X1)[right=2.5cm of X0, yshift=2.5cm]{\tiny\(4n+1\)};
    \node[state, label=above:\(2n+3\)](X2)[below=0.75cm of X1]{\tiny\(4n+2\)};
    \node[below=0.1cm of X2]{\(\vdots\)};
    \node[state, label=above:\(2n+3\)](X2n)[below=1.25cm of X2]{\tiny\(6n\)};
    
    \node[state, label=above:\(2n+2\)](X1_1)[left=4cm of X0, yshift=5.0cm]{\tiny\(2n+1\)};
    \node[state, label=above:\(2n+2\)](X2_1)[below=0.8cm of X1_1]{\tiny\(2n+2\)};
    \node[below=0.1cm of X2_1]{\(\vdots\)};
    \node[state, label=above:\(2n+2\)](Xn_1)[below=1.25cm of X2_1]{\tiny\(3n\)};
    
    \node[state,label=above:\((2n+2)(4n+3)\)](Xn+1_1)[left=4cm of X0, yshift=-1.0cm]{\tiny\(3n+1\)};
    \node[state,label=above:\((2n+2)(4n+5)\)](Xn+2_1)[below=0.8cm of Xn+1_1]{\tiny\(3n+2\)};
    \node[below=0.1cm of Xn+2_1]{\(\vdots\)};
    \node[state,label=right:\((2n+2)(6n+1)\)](X2n_1)[below=1.25cm of Xn+2_1]{\tiny\(4n\)};
    
    \node[state, label=above:\(-2n - 2\)](X1_2)[left=1.5cm of X1_1]{\tiny\(1\)};
    \node[state, label=above:\(-2n\)](X2_2)[left=1.5cm of X2_1]{\tiny\(2\)};
    \node[below=0.1cm of X2_2]{\(\vdots\)};
    \node[state, label=above:\(-4\)](Xn_2)[left=1.5cm of Xn_1]{\tiny\(n\)};
    
    \node[state,label=above:\(-2n - 1\)](Xn+1_2)[left=1.5cm of Xn+1_1]{\tiny\(n+1\)};
    \node[state,label=above:\(-2n +1\)](Xn+2_2)[left=1.5cm of Xn+2_1]{\tiny\(n+2\)};
    \node[below=0.1cm of Xn+2_2]{\(\vdots\)};
    \node[state,label=above:\(-3\)](X2n_2)[left=1.5cm of X2n_1]{\tiny\(2n\)};

    \tikzset{edgeQ/.style={midway, above, sloped}}
    \tikzset{edgeR/.style={midway, above, sloped}}
    \tikzset{edgeS/.style={midway, below, sloped}}
    
    \path 
    (X0) edge node[edgeQ]  
      	{\(-2(2n+2)\)} (X1)
    (X0) edge node[edgeQ]  
      	{\(-2(2n+2)\)} (X2)
    (X0) edge node[edgeQ] 
      	{\(-2(2n+2)\)} (X2n)
       
    (X0) edge node[edgeR]
      	{\(2(2n)(2n+2)\)} (X1_1)
    (X0) edge node[edgeR]
      	{\(2(2n + 1)(2n+2)\)} (X2_1)
    (X0) edge node[edgeR]
      	{\(2(3n - 1)(2n+2)\)} (Xn_1)
       
    (X0) edge node[edgeS]
      	{\(-2(2n+1)(2n+2)\)} (Xn+1_1)
    (X0) edge node[edgeS]
      	{\(-2(2n+2)(2n+2)\)} (Xn+2_1)
    (X0) edge node[edgeS]
      	{\(-2(3n)(2n+2)\)} (X2n_1)
        
    (X1_1) edge node[midway, above]
      	{\(-100n^2\)} (X1_2)
    (X2_1) edge node[midway, above]
      	{\(-100n^2\)} (X2_2)
    (Xn_1) edge node[midway, above]
      	{\(-100n^2\)} (Xn_2)
        
    (Xn+1_1) edge node[midway, below]
      	{\(-100n^2\)} (Xn+1_2)
    (Xn+2_1) edge node[midway, below]
      	{\(-100n^2\)} (Xn+2_2)
    (X2n_1) edge node[midway, below]
      	{\(-100n^2\)} (X2n_2)
        
    ;  
    \end{tikzpicture}
        }
    \caption{Starlike tree VCSP from Example~\ref{ex:long_star} with a steepest ascent of length \(4n^2 + 8n\). }
    \label{fig:long_star_vcsp}
\end{figure}

\begin{example}\textbf{(Starlike tree with quadratic steepest ascent)}\label{ex:long_star}
We define a VCSP instance  over $6n+1$ variables \(\{c\}\cup \{1, \dots, 6n\}\)
We set the unary constraints as:
\begin{equation*}
    w(\{u\}) = 
    \begin{cases}
        2n+2 & \text{if } u=c\\
        -2n-4+2u & \text{if } u\in\{1, 2, \dots, n\}\\
        -4n-3+2u & \text{if } u\in\{n+1, n+2, \dots, 2n\}\\
        2n+2 & \text{if } u\in\{2n+1, 2n+2, \dots, 3n\}\\
        (2n+2)(2u-2n+1) & \text{if } u\in\{3n+1, 3n+2, \dots, 4n\}\\
        2n+3 & \text{if } u\in\{4n+1, 4n+2, \dots, 6n\}
    \end{cases}
\end{equation*}
\noindent and binary constraints as:
\begin{equation*}
    w(\{v,u\}) = 
    \begin{cases}
        2(u-1)(2n+2) & \text{if } v=c \text{ and } u\in\{2n+1, 2n+2, \dots, 3n\}\\
        -2(u-n)(2n+2) & \text{if } v=c \text{ and } u\in\{3n+1, 3n+2, \dots, 4n\}\\
        -2(2n+2) & \text{if } v=c \text{ and } u\in\{4n+1, 4n+2, \dots, 6n\}\\
        -100n^2 & \text{if } v\in \{2n+1, 2n+2, \dots, 4n\} \text{ and } u=v-2n
    \end{cases}
\end{equation*}
These constraints are illustrated in Figure~\ref{fig:long_star_vcsp}.



Consider a steepest ascent starting at \(01^{2n}0^{4n}\), 
i.e., leaf variables \(1, \dots, 2n\) start at value \(1\) and all other variables start at value \(0\).
Note that because variables \(1, \dots, 2n\) start at assignment \(1\), the overpowering negative binary constraint \(-100n^2\) ensures that a variable \(u \in \{2n+1, \dots, 4n\}\) cannot flip to \(1\), until its attached leaf variable has flipped to \(0\).

Now, since variables \(4n+1, \dots, 6n\) have the largest unary constraint at the start of the ascent, they all flip to \(1\).
After this point, variable \(1\), having induced unary constraint \(-2n-2\) will flip to \(0\), ``unlocking" variable \(2n+1\), which immediately flips to \(1\) given its induced unary of \(2n+2\).
At this point, the induced unary on the centre variable is given by
\begin{equation*}
    w(\{c\})+w(\{c,2n+1\})+\sum_{i=4n+1}^{6n}w(\{c,i\}) \ = \ 2n+2+2(2n)(2n+2)+2n(-2(2n+2)) = 2n+2.
\end{equation*}
leading to a flip of the centre to \(1\) followed by variables \(4n+1, \dots, 6n\) all flipping back to \(0\).

By inspecting the induced unaries, we see that the next variable to flip is variable \(n+1\), having induced unary constraint \(-2n-1\), inducing it to flip to \(0\).
This ``unlocks" variable \(3n+1\), which has induced unary 
\begin{equation*}
    w(\{3n+1\})+w(\{c,3n+1\}) = (2n+2)(4n+3)-2(2n+1)(2n+2) = 2n+2
\end{equation*}
leading to a flip to \(1\).
At this point, the induced unary on the centre variable is given by
\begin{equation*}
    w(\{c\})+w(\{c,2n+1\})+w(\{c,3n+1\})=2n+2+2(2n)(2n+2)-2(2n+1)(2n+2)=-(2n-2)
\end{equation*}
leading to a flip of the centre to \(0\) followed by variables \(4n+1, \dots, 6n\) all flipping back to \(1\).

At this point, variable \(2\) is the next variable to flip, followed by variable \(2n+2\), which in turn is followed by the centre variable back to \(1\), followed by flips of variables \(4n+1, \dots, 6n\).
In general, the steepest ascent follows a structure where every time a leaf variable \(i\in\{1,\dots,2n\}\) flips, this flip is followed by a flip of variable \(2n+i\), which induces a flip of the centre variable, which induces flips of variables \(4n+1, \dots, 6n\).
Thus, each flip of a variable \(i\in\{1,\dots,2n\}\), is followed by \(2n+2\) flips before the next variable in \(\{1,\dots,2n\}\) flips, meaning they are followed by a total of \(2n(2n+2)\) flips.
Since each of these variables also makes a single flip themselves, and variables \(4n+1, \dots, 6n\) also flip once each at the start of the ascent, the total number of flips is equal to \(2n(2n+2)+4n=4n^2+8n\).

\vspace{0.2cm}
 This example is very similar to Example~\ref{ex:star_vcsp_quad_ascent},
 with weights multplied by $2n + 2$ and the set of leaf variables \(\{1,\dots, 2n\}\) added, which together force the steepest ascent in this example to induce the longest ascent on the instance restricted to \(\{0\}\cup\{2n+1,\dots,6n\}\).
\end{example}

\section{Consequences for the parameterized complexity of local search}
\label{sec:PCLS}

Our star-of-gadgets construction and the prior chain-of-gadgets constructions (overviewed in \cref{sec:chaingadgets,fig:previous-constructions}) can be seen as part of a broader research programme exploring the parameterized complexity of local search.
The two keys to progress in this area are to identify suitable parameters of problem instances that can be used to 
(1) obtain limits on the complexity of local search for those instances, and
(2) distinguish between which local search algorithms will be efficient or not on those instances.
Under the second criteria, a successful parameter would delineate the complexity of instances by the kind of strict local search algorithms that succeed on them.
For example: 
for a very low value of the parameter, (a) all local search algorithms are efficient; 
for low values only (b) some local search algorithms are inefficient, but certain popular algorithms (say greedy local search) are still efficient;
for intermediate values (c) even popular locals earch algorithms become inefficient,  but there are still short ascents to peaks, so \emph{some} strict local search algorithm might still be efficient;
and finally for high values of this parameter, (d) all ascents from some initial assignment become exponentially long.
As a first test of of the success of such a hypothetical parameter, we would need to create the space to distinguish between strict local search algorithms by showing that there are values of the parameter for which there exists an exponential gap between the longest vs shortest  ascent.

Prior work focused on the max-degree of the constraint graph, and on notions such as treewidth, and the more restrictive pathwidth. 
This work has reduced both the max-degree and the pathwidth to (more or less) as low as they can go, giving us a (close to) tight characterization of how simple VCSPs can be in terms of max-degree and pathwidth while still having some ascents, the steepest ascent, or some shortest ascent, exponentially long.
However, this prior work did not identify any parameter values where there was a significant gap between the longest and shortest ascent in the worst case (see Table~\ref{tab:PC_BooleanVCSP}).

The earliest results focused on the \textit{maximum degree} of the constraint graph,
but this proved to be too coarse-grained to be very useful as a complexity parameter.
Just by going from max-degree two to three, Boolean VCSPs go from P to NP~\cite{maxcut-deg3-NP}
and longest ascents go from quadratic~\cite{KazThesis} to exponential~\cite{repCP,pw2MSc,slow-greed}.

Treewidth and pathwidth are more fine-grained parameters, and can be used to separate local search from non-local search since VCSPs with bounded treewidth are tractable for non-local dynamic programming algorithms~\cite{BB73,VCSPsurvey} but are still expressive enough to encode the exponential ascents that can make strict local search inefficient~\cite{VCSPpw3_allExp}.
However, these measures are still too coarse-grained for discriminating between different strict local search algorithms for Boolean VCSPs:
at treewidth one (i.e., trees), all ascents are at most quadratic~\cite{repCP};
by pathwidth two, the steepest ascent can be exponentially long, so popular algorithms like greedy local search can be inefficient~\cite{slow-greed,pw2MSc};
by pathwidth three, even the shortest ascent from some initial assignments 
can be exponentially long, so all strict local search algorithms can be inefficient~\cite{VCSPpw3_allExp}.
Hence, for a more fine-grained parameterized complexity analysis of local search, we are forced to look for more subtle properties of constraint graphs.

Now that we have shown that a star of gadgets can work where only a chain of gadgets was known before, 
it makes sense to move from pathwidth -- 
which measures how how far the constraint graph is from a path/chain --
to \textit{treedepth }-- which measures how far the constraint graph is from a star.
Given a tree $T=(V,P)$ rooted at $r_0$, let $\prec_T$ be the descendant relationship for $T$.
In other words, $r \prec_T s$ if the path from $r$ to $r_0$ in $T$ passes through $s$ and $r \neq s$.
We say that $T$ is an \emph{elimination-tree} of the binary VCSP-instance $\mathcal{C} = (V,\mathcal{S},w)$ if for every $\{u,v\} \subseteq S \in \mathcal{S}$ either $u \prec_T v$ or $v \prec_T u$.
For a connected $\mathcal{C}$, the $\text{treedepth}(\mathcal{C})$ is the minimal $\text{height}(T)$ over all elimination-trees of  $\mathcal{C}$.
For a disconnected $\mathcal{C}$, $\text{treedepth}(\mathcal{C})$ is the maximum over its connected components.
Treedepth is closely related to treewidth and pathwidth: 
any graph $G$ has $\text{treewidth}(G) \leq \text{pathwidth}(G) \leq \text{treedepth}(G) - 1 \leq \text{treewidth}(G)\log n$~\cite{td_old1,BGHK95,td_chapter}.

The only constraint graphs that have treedepth $1$ are VCSPs with only unary constraints.
Unary VCSPs represent fitness landscapes where each variable has a preferred assignment independent of all other variables.
Thus, there is a single fitness peak and all ascents to that peak are (very) short: all ascents have length equal to the number of variables on which the initial assignment and the peak differ.
Since an ascent can change at most one variable per step, these are the shortest possible ascents in any landscapes, so we give them the special name of \emph{ascents of Hamming-distance}.
Thus, for treedepth one: the longest ascent, steepest ascent, and shortest ascent all have the same length, so there is no separation in the linear bound.

The only constraint graphs that have treedepth $2$ are disjoint unions of stars.
In terms of bounds on all ascents, stars are trees and trees have at most quadratic ascents~\cite{repCP}; \cref{ex:star_vcsp_quad_ascent} described 
a star-structured binary Boolean VCSP that saturates this bound.
In terms of bounds on the \emph{steepest} ascent, we showed that for any fitness landscape represented by a star-structured VCSP the steepest ascent has length at most $2(n - 1)$ (\cref{prop:star_sa_is_linear}); \cref{ex:longest_star_graph_sa} saturates this bound.
So for ``some ascent'' vs ``steepest ascent'' we see a polynomial separation 
at treedepth two.
We can increase this polynomial separation slightly more for the shortest ascent, by showing that 
any constraint graph of treedepth two has a Hamming-distance ascent from every assigmnment:

\begin{lemma}
Given a binary Boolean VCSP $\mathcal{C} = (\{c\} \cup V,\mathcal{S},w)$ that is star-structured with center variable $c$ and set of leaves $V$ and any initial assignment $x_0$ to $c$ and $x^0 \in \{0,1\}^V$ to the leaves, the following are all true:
\begin{enumerate}
\item The fitness landscape of $\mathcal{C}$ has at most two peaks, and if both exist then they differ on $c$: 
i.e., they have the form $bx^{*b}$ for an assignment $b \in \{0,1\}$ to the variable $c$ and a corresponding assignment $x^{*b} \in \{0,1\}^V$ to the leaves.
\item If the peak with $b = x_0$ exists then there is an ascent to it of Hamming-distance (i.e., $||x^0, x^{*b}||_H$).
\item If the peak with $b = 1 - x_0$ exists and can be reached from $x_0 x^0$ then it can be reached by an ascent of Hamming-distance (i.e., $1 + ||x^0, x^{*b}||_H$).
\end{enumerate}
\label{thm:short_star}
\end{lemma}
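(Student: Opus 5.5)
The plan is to reduce everything to the observation that once the centre is fixed, a star decouples into independent unary problems on the leaves. Fix $c=b$ and take the induced subproblem on $V$ (\cref{def:effective-unary}). Each leaf $u$ then has induced unary $w_b(u)=w(\{u\})+b\,w(\{c,u\})$, and the leaves interact with nothing else. So with $c=b$ the leaves have a unique best assignment $x^{*b}$, where $x^{*b}_u=1$ iff $w_b(u)>0$. I would assume the induced unaries are nonzero, or else treat zero-unary leaves as free and fix one canonical choice.

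\textbf{Part 1.} Any peak with $c=b$ must have every leaf locally optimal, so its leaf part is $x^{*b}$. Hence there is at most one peak for each value of $c$, giving the form $bx^{*b}$. The assignment $bx^{*b}$ is a peak exactly when flipping $c$ does not help. That condition is a sign condition on the induced unary of $c$,
\[ g(y)=w(\{c\})+\sum_{u\in V}w(\{c,u\})y_u . \]

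\textbf{Part 2.} Keep $c=x_0$ and flip, one at a time and in any order, each leaf with $x^0_u\neq x^{*x_0}_u$. Each such flip moves $u$ to the sign of its induced unary, so it strictly increases $f$ by $|w_{x_0}(u)|$ regardless of the other leaves. This gives an ascent of length $\|x^0-x^{*b}\|_H$ that ends at the peak, which exists by hypothesis.

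\textbf{Part 3.} This is the main obstacle. Write $b'=1-x_0$ and $D=\{u: x^0_u\ne x^{*b'}_u\}$. Let $A\subseteq D$ be the leaves whose move toward $x^{*b'}_u$ is improving under $c=x_0$ and also increases the quantity $\sigma g$, where $\sigma=\pm1$ is chosen so that a positive $\sigma g$ means flipping $c$ to $b'$ is improving. The proposed ascent has three stages:
\begin{enumerate}
\item flip the leaves in $A$, which is improving by the definition of $A$;
\item flip $c$;
\item flip the leaves in $D\setminus A$, each improving because it moves to the sign of $w_{b'}$.
\end{enumerate}
The leaves in $A$ need no further move, since they are already at $x^{*b'}$, so the length is $1+|D|$.

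The crux is showing that stage 2 is improving, i.e.\ $\sigma g(z)>0$ at the intermediate leaf assignment $z$. To do this, take any ascent $\pi$ from $x_0x^0$ to $b'x^{*b'}$ and let $y$ be the leaf assignment at the last flip of $c$ into $b'$. Then $\sigma g(y)>0$. Since $g$ is separable over leaves, it suffices to show that for each leaf $u$, the contribution of $z_u$ to $\sigma g$ is at least that of $y_u$.

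I would do this by a case analysis on $u$.
\begin{itemize}
\item Leaves with $w(\{c,u\})=0$ are irrelevant.
\item Suppose $y_u$ gives a strictly larger contribution than $z_u$. By the choice of $A$, this means $y_u\neq z_u$ and the move $z_u\to y_u$ raises $\sigma g$.
\item If $u\in D\setminus A$, the move to $x^{*b'}_u$ raises $\sigma g$ but is not improving under $c=x_0$. One then checks that $\pi$ could never have moved $u$ there under $c=x_0$, and moving there under $c=b'$ happens only after $c$ has already flipped.
\item If $u\notin D$, the leaf was moved away from $x^{*b'}_u=x^0_u$ in $\pi$ and later moved back under $c=b'$. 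Moving away from $x^{*b'}$ is never improving under $b'$, so it happened under $c=x_0$, and its return under $b'$ is improving. Writing $a$ for the unary change and $w$ for the binary change of that move, these two conditions give (checking both values of $b'$) that the move lowers $\sigma g$. This contradicts the assumption that $y_u$ contributes more.
\end{itemize}

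I expect this per-leaf sign bookkeeping, together with tracking the last $c$-flip time in $\pi$ (where $u$ may have oscillated through several $c$-flips), to be the main technical step. I would handle it by looking only at the direction of the last move of $u$ before that time and which value of $c$ it happened under.
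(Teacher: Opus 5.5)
Parts 1 and 2 are fine. Your three-stage ascent in Part 3 (flip $A$, flip $c$, flip $D\setminus A$) is the right ascent, and it is the same pruning idea as the paper's.

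\textbf{The gap.} It lies in the proof that stage 2 is improving, and it comes from taking $y$ at the \emph{last} flip of $c$ into $b'$. With that choice the per-leaf inequality is false for $u\in D\setminus A$. Such a leaf can move to $x^{*b'}_u$ during an \emph{earlier} stretch with $c=b'$. It can then stay there while $c$ returns to $x_0$, since an ascent may pick any improving flip and nothing forces $u$ back. It is therefore still at $x^{*b'}_u$ when $c$ makes its last flip. So ``moving there under $c=b'$ happens only after $c$ has already flipped'' gives no contradiction. Looking at the last move of $u$ before that time only tells you that move happened under $c=b'$.

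\textbf{Counterexample.} Take $w(\{c\})=1$ and leaves $u,v,p$ with $w(\{u\})=-1$, $w(\{c,u\})=3$, $w(\{v\})=6$, $w(\{c,v\})=-5$, $w(\{p\})=1$, $w(\{c,p\})=2$, starting from all zeros. Flipping $c,u,v,c,p,c$ in that order gives fitnesses $0,1,3,4,5,6,7$ and ends at the peak $1111$. Here $D=\{u,v,p\}$ and $A=\{p\}$, so $z=(0,0,1)$ on $(u,v,p)$. At the last flip of $c$ the leaves are at $y=(1,1,1)$. Thus $u\in D\setminus A$ contributes $3$ to $g(y)$ but $0$ to $g(z)$, and the per-leaf domination fails. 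The conclusion $g(z)=3>0$ still holds in this example, just not for the reason you give.

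\textbf{The fix.} Use the \emph{first} flip of $c$ from $x_0$ to $b'$, which is what the paper does. The paper keeps only the steps before $c$ first flips and deletes the leaf flips whose coupling to $c$ has the wrong sign, which can only increase the induced unary on $c$. Before that moment $c=x_0$ throughout, so every leaf has moved at most once, and only by a move that is improving under $c=x_0$.
\begin{itemize}
\item This rules out the $D\setminus A$ case immediately: reaching $x^{*b'}_u$ would have required an improving move under $x_0$ that raises $\sigma g$, which would put $u$ in $A$.
\item For $u\notin D$, a move that is improving under $x_0$ and raises $\sigma g$ is also improving under $b'$. After relabelling, $w(\{u\})>0$ and $w(\{c,u\})>0$ give $w(\{u\})+w(\{c,u\})>0$. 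Hence $u\in D$, a contradiction.
\end{itemize}
With this single change your per-leaf bookkeeping goes through, and you no longer need to track oscillations across several flips of $c$.
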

\begin{proof}
Consider the two induced subproblems $\mathcal{C}^b$ on $V$ with $x[c] = b$.
Each of these subproblems has only unary constraints, 
so each induced subproblem has one fitness peak $x^{*b} \in \{0,1\}^V$.
For $x^* \in \{0,1\}^{\{c\} \cup V}$ to be a fitness peak of $\mathcal{C}$, the assignment must be a peak in $\mathcal{C}^{x^*_c}$ and the $c$ variable must not want to flip.
Thus, there are at most two peaks and they have the form $x^* = bx^{*b}$. This establishing case (1).

For notational convenience, for the rest of the proof, relabel all the domains so that $x_0 = 0$ and $x^0 = 0^{|V|}$.
Now, to get case (2), flip from $0$ to $1$ all $v \in V$ such that $w(v) > 0$.
For case (3), consider any ascent from $00^{|V|}$ that ends at $1x^{*1}$.
Look at the steps in this ascent before variable $c$ flips from $0$ to $1$.
Suppose that one of these steps flips a $v \in V$ with $w(\{c,v\}) < 0$ then we can remove this step without breaking the ascent.
Repeatedly do this.
\end{proof}

Given our \cref{ex:longest_star_graph_sa} of a quadratic steepest ascent from a starlike tree, there is no separation between longest and steepest ascents for trees.
But a small separation of quadratic versus linear is still possible for steepest versus shortest ascents from trees.
This is because a slightly less precise version of \cref{thm:short_star} (that does not characterize the exact structure of the local peaks) also holds for trees:

\begin{theorem}
   Given a binary Boolean VCSP $\mathcal{C} = (V,\mathcal{S},w)$ that is tree-structured, there exists an ascent of Hamming-distance from any initial assignment $x^0 \in \{0,1\}^V$.
    \label{thm:short_trees}
\end{theorem}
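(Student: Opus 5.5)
We prove \cref{thm:short_trees} by induction on $|V|$, with a stronger hypothesis: from any initial assignment there is an ascent that flips every variable at most once and ends at a local peak. Such an ascent has length exactly the Hamming distance from $x^0$ to its endpoint. The base cases are a single variable, or more generally a forest of isolated variables. These are unary VCSPs, handled by the treedepth-one discussion above. For $|V|\geq 2$ we pick a leaf $\ell$ of the constraint tree with neighbour $p$. If $\ell$ has no neighbour, we treat it as in the unary case and remove it. We then split on how $\ell$'s preferred value depends on $p$.

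\textbf{Case (a): $\ell$'s preference does not depend on $p$.} Its induced unary weights $w(\{\ell\})$ and $w(\{\ell\})+w(\{\ell,p\})$ have the same sign, or one is zero. If $x^0_\ell$ is not preferred, we flip $\ell$ as the very first step, which is strictly improving. We then freeze $\ell$ and pass to the induced subproblem (\cref{def:effective-unary}) on $V\setminus\{\ell\}$ in background $x_\ell$. This subproblem is again tree-structured, so induction gives an at-most-once ascent to a peak of the subproblem. Every step of that ascent is also an improving step in $\mathcal{C}$. Since $\ell$ never wants to flip again whatever $p$ does, the endpoint is a peak of $\mathcal{C}$. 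Flipping $\ell$ first, rather than last, matters here: it guarantees the peak of the subproblem was computed with $\ell$'s final value in the background.

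\textbf{Case (b): $\ell$'s preference depends on $p$.} We relabel the domain of $\ell$ so that $\ell$ strictly prefers to equal $x_p$. This makes the interaction cooperative: $w(\{\ell,p\})>0$, and a value of $p$ that agrees with $\ell$ is rewarded. Here we do \emph{not} touch $\ell$ at the start. We freeze $\ell$ at $x^0_\ell$ and use induction on the induced subproblem on $V\setminus\{\ell\}$, obtaining an at-most-once ascent. We lift it to $\mathcal{C}$ with one modification. If $p$ flips during this ascent and $\ell$ now disagrees with $p$, we insert a flip of $\ell$ immediately after $p$'s flip; this step is strictly improving. If the ascent ends and $\ell$ still disagrees with $p$, we flip $\ell$ at the end.

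There are two ways $\ell$ could flip twice, and neither occurs. First, $\ell$ cannot flip both before and after a flip of $p$, because it is never flipped before $p$ flips. Second, the subproblem ascent flips $p$ at most once, so $\ell$ flips at most once overall.

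We must check that the inserted $\ell$-flip does not invalidate the remaining subproblem steps, and that the final assignment is a peak. The only variable whose induced weights depend on $x_\ell$ is $p$.
\begin{itemize}
\item If $\ell$ is inserted right after $p$'s flip, then $p$ never flips again in the subproblem ascent. The other variables' steps are unaffected because they do not interact with $\ell$.
\item At the endpoint, $p$ was non-improving with $\ell$ at its old value. By cooperativity, setting $\ell$ equal to $x_p$ only increases the gain of $p$ keeping its current value, so $p$ remains non-improving.
\item The leaf $\ell$ itself agrees with $p$, hence is at its best response.
\end{itemize}
Every other variable is unaffected by $\ell$. Therefore the endpoint is a local peak of $\mathcal{C}$, reached with each variable flipping at most once.

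The main obstacle is exactly the timing of $\ell$'s flip. Flipping $\ell$ greedily at the start can force a second flip after $p$ changes. Flipping it late can destroy peakness at $p$. The cooperative relabelling in Case (b) is what resolves this, and the step to verify most carefully is the claim that $\ell$ agreeing with $p$ never makes $p$ want to flip. This holds because $p$'s gain from keeping its value contains the term $w(\{\ell,p\})$, which is positive after the relabelling.
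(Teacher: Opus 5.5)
Your proof is correct and is essentially the paper's argument: induction on a leaf $\ell$ with parent $p$, fixing $\ell$ first when its preference does not depend on $p$, and otherwise running the induction with $\ell$ frozen and fixing $\ell$ afterwards, using that $\ell$ agreeing with $p$ only reinforces $p$ --- your cooperative relabelling just merges the paper's cases 3 and 4, and inserting $\ell$'s flip right after $p$'s flip rather than at the very end is a harmless variant. One small slip: when one of $\ell$'s induced unaries is zero (e.g.\ $w(\{\ell\})=0<w(\{\ell,p\})$ with $x^0_\ell=x^0_p=0$), the opening flip in Case (a) is not strictly improving, so these degenerate cases belong in Case (b) with weak preferences (flip $\ell$ only when that is strictly improving), where your argument goes through unchanged; the paper's supposedly exhaustive four cases, which all use strict inequalities, skip them as well.
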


\begin{proof}
We prove this by induction on the size of the tree.
For $n = 1$, there is a single variable so it can flip at most once.
Assume that each variable flips at most once for a tree on $n$ variables and consider a tree on $n + 1$ variables.
Let $u$ be a leaf and $v$ its parent.
Relabel all domains so that $x^0 = 0^{n  + 1}$.
Consider the following four exhaustive cases of constraints on $u$:
\begin{enumerate}
\item If $w(\{u\}) < 0$ and $w(\{u\}) < w(\{u,v\})$ then $u$ will never want to flip, and the result follows from the inductive hypothesis on the induced subproblem on the rest of the tree with $x_u = 0$.
\item If $w(\{u\}) > 0$ and $w(\{u\}) > w(\{u,v\})$ then $u$ will want to flip to $x_u = 1$ regardless of the value of $x_v$.
So flip $u$ to $1$ and the result follows from the inductive hypothesis on the rest of the tree with $x_u = 1$.
\item If $0 > w(\{u\}) > -w(\{u,v\})$ then use the inductive hypothesis on the rest of the tree with $x_u = 0$ to find a local peak $x^* \in \{0,1\}^n$ while flipping each variable in the rest of the tree at most once.
If $x^*_v = 0$ then $u$ is at equilibrium and the result follows.
Otherwise, if $x^*_v = 1$ then flipping $u$ to $1$ only increases the induced unary on $v$, so keeping it at $1$, and the result follows after one flip of $u$.
\item If $0 < w(\{u\} < -w(\{u,v\})$ then use the same argument as above but $u$ flips when $x^*_v = 0$ and stays at $0$ when $x^*_v = 1$.
\end{enumerate}
\end{proof}

Unfortunately, \cref{thm:short_trees} does not translate to a more general statement about treedepth since trees can have treedepth of upto $\log n + 1$. 
But all the prior chain of gadgets constructions with exponential longest, steepest, and even shortest ascents that we described in \cref{sec:chaingadgets} also have treedepth of $\Theta(\log n)$. 
This is because for any VCSP on $n$ variables constructed as a chain of $m$ gadgets with each of constant size $\leq k$: 
(1) will include a path in the constraint graph of length linear in the number of variables and hence have treedepth of $\Omega(\log n)$~\cite{BGHK95,td_chapter}; and 
(2) have a treedepth of $O(k \log m)$ by treating each gadget as a single node, building an elimination tree of depth $\log m + 1$ for the resulting path of $m$ nodes and then replacing each node by the $k$ variables in the gadget.
In contrast, our star of gadgets construction in \cref{fig:triangle-flower} has treedepth three:
\begin{proposition}
The construction $\C_{\leq n}$ shown in \cref{fig:triangle-flower} has treedepth three.
\label{prop:td3}
\end{proposition}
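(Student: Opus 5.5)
We prove the statement by exhibiting an elimination-tree of height three, which gives $\text{treedepth}(\C_{\leq n}) \leq 3$, and then showing that no elimination-tree of height at most two exists, which gives $\text{treedepth}(\C_{\leq n}) \geq 3$. Throughout we use the paper's convention that height counts the number of vertices on a longest root-to-leaf path, since unary VCSPs have treedepth $1$ and stars have treedepth $2$.

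For the upper bound, root the tree $T$ at the centre $c$. For each gadget $k$, make $(1,k)$ and $(3,k)$ children of $c$, make $(2,k)$ a child of $(1,k)$, and make $(4,k)$ a child of $(3,k)$. We then check that every scope in \cref{ex:star_of_gadgets} is comparable under $\prec_T$:
\begin{itemize}
\item every variable is a descendant of $c$, so all edges $\{c,(i,k)\}$ are comparable;
\item $(2,k) \prec_T (1,k)$ and $(4,k) \prec_T (3,k)$, so the two remaining binary scopes $\{(1,k),(2,k)\}$ and $\{(3,k),(4,k)\}$ are comparable;
\item unary and empty scopes impose no condition.
\end{itemize}
The longest root-to-leaf path in $T$ is $c,(1,k),(2,k)$, which has $3$ vertices, so $T$ has height three.

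For the lower bound, the constraint graph is connected and contains the triangle $c,(1,1),(2,1)$. Suppose $T$ is an elimination-tree of height at most two. Then $T$ is a root with children, and any two vertices that are both non-root are siblings and hence incomparable under $\prec_T$. At most one vertex of the triangle can be the root, so two of its vertices are non-root. These two vertices are adjacent, so they must be comparable, which is a contradiction. Hence every elimination-tree has height at least three.

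Equivalently, treedepth is monotone under subgraphs and $K_3$ has treedepth $3$. I would nonetheless give the direct two-line argument, since it avoids citing that monotonicity fact.

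There is no real obstacle here. The only point needing care is to fix the height convention (vertices rather than edges on a root-to-leaf path) consistently with the paper's statements that unary VCSPs have treedepth $1$ and stars have treedepth $2$.
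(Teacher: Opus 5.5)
Your proof is correct and follows the paper's approach. The upper bound uses the same elimination tree: root $c$, with $(2,k)$ under $(1,k)$ and $(4,k)$ under $(3,k)$. The lower bound uses the triangle $c,(1,1),(2,1)$, which implicitly needs $n \geq 1$, as the paper also assumes. Where the paper cites subgraph-monotonicity of treedepth and $\text{td}(K_3)=3$, you argue directly from the definition of elimination-tree; this is a harmless, slightly more self-contained variant.
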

\begin{proof}
    Since the complete graph on three vertices is a subgraph of $\C_{\leq n}$, we have $\text{td}(\C_{\leq n})\geq \text{td}(K_3)= 3$.
    To establish that $\text{td}(\C_{\leq n})\leq 3$, consider the elimination tree of $\C_{\leq n}$ with variable $0$ as the root and edges $\{0,(1,k)\},\{(1,k),(2,k)\},\{0,(3,k)\},\{(3,k),(4,k)\}$ for all $k\in[n]$, which has depth $3$.
\end{proof}
Hence we have shown that by moving from treedepth two to treedepth three in the constraint graph we move from all ascents being 
necessarily at most quadratic in length to the possibility of encoding a landscape with an exponential ascent. 
This raises the natural questions: is there a similar 
threshold in terms of treedepth for the possibility of encoding an exponential \textit{steepest} ascent, or for 
encoding an exponential \textit{shortest} ascent? 

As a first step, we ask whether it is possible to construct a constraint graph 
of treedepth three which can encode a landscape with an exponential \textit{steepest} ascent.
Note that our star of gadgets construction (\cref{ex:star_of_gadgets}) has an exponential ascent from the initial assignment $0^n$, 
but the \textit{steepest} ascent from this initial assignment 
is a single step flipping the central variable to immediately reach the peak $1(0)^{n - 1}$.
The starlike tree in \cref{ex:long_star} has a treedepth of $3$ and represents a fitness landscape with a quadratic steepest ascent.
This example was developed using established tricks for converting a long ascent into a steepest ascent~\cite{pw4,pw2MSc}, 
but these tricks increase the treedepth of the constraint graph by at least one. 
We conjecture that a constraint graph of treedepth at least four will be required to be able to encode an exponential steepest ascent.

However, for \textit{shortest} vs longest ascents we can prove that an exponential gap exists at treedepth three.
Specifically, we now show that the length of the \textit{shortest} ascent from any initial assignment can be at most linear:
\begin{theorem}
    Given a Boolean VCSP $\mathcal{C} = (V,\mathcal{S},w)$ of treedepth three,
    from any initial assignment $x_0 \in \{0,1\}^V$ there exists an ascent of length $\leq 3|V| - 1$.
    \label{thm:short_td3}
\end{theorem}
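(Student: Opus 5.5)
The plan is to exploit the shape of a height-$3$ elimination tree. We may assume $\mathcal{C}$ is connected, since otherwise we treat each component separately and the bound adds up. Let $T$ be an elimination tree of height $3$ with root $r$. Every constraint joins a vertex to one of its ancestors in $T$. Hence, once $r$ is fixed to a value $b$, the induced subproblem (\cref{def:effective-unary}) on $V\setminus\{r\}$ is a disjoint union of stars $\mathcal{C}^b_1,\dots,\mathcal{C}^b_m$. Star $i$ is centred at the child $a_i$ of $r$, and its leaves are the children of $a_i$. Write $n=|V|$. The target count is $3n-1 = 2 + 3(n-1)$, so the ascent I aim for flips $r$ at most twice and every other variable at most three times. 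It will consist of three phases, one between consecutive flips of $r$.

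\textbf{Phase 1.} Fix $r=x_0[r]$. By \cref{thm:short_star}(2), each star $\mathcal{C}^{x_0[r]}_i$ has an ascent of Hamming-distance to its peak with the centre held at its current value. By part (3) of that lemma, if the other peak is reachable, it can be reached with one extra centre flip. I run these independent ascents star by star, which uses at most $n-1$ flips and flips each non-root variable at most once. If $r$ then does not want to flip, we are at a local peak. Otherwise we flip $r$.

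\textbf{Phase 2.} This is the delicate step and the main obstacle. I relabel so that $r$ has just flipped from $0$ to $1$. Its induced unary is then a sum $\sum_i \phi_i$, where $\phi_i$ depends only on the assignment of star $i$, and this sum is positive. I want, for each star $i$, an ascent in $\mathcal{C}^1_i$ of length at most about $|\text{star}_i|$ that ends at a peak of $\mathcal{C}^1_i$ and never decreases $\phi_i$. If such ascents exist, $r$ stays positive and we stop after at most $n-1$ further flips.

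To build them I would mimic the pruning trick in the proof of \cref{thm:short_star}(3). The leaves that want to flip and whose flip raises $\phi_i$ (or leaves it unchanged) are flipped freely. For the centre $a_i$, I compare the two candidate peaks of $\mathcal{C}^1_i$ and choose the one whose $\phi_i$ value is at least the current one, if it exists. I expect a case analysis on the sign of $w(\{r,a_i\})$ and of each $w(\{r,v\})$, in the spirit of the four cases in \cref{thm:short_trees}.

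\textbf{Phase 3.} If some star has no such monotone ascent, $r$ may want to flip back to $0$. I allow this second flip of $r$ and argue that it cannot recur. The stars that forced it have moved monotonically toward the $r=0$ preference, so in the final phase, run with $r=0$, I apply \cref{thm:short_star} once more, again choosing monotone ascents, now in the direction that keeps $r=0$ preferred. The argument must show that this choice always exists in phase 3. The intended reason is that the earlier forced decrease of $\phi_i$ certifies that the $r=0$-favourable peak of star $i$ is the one reached.

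Proving this last claim, that at most two flips of $r$ suffice and that each phase flips each non-root variable at most once, is where I expect the real work. Granting it, the total is at most $(n-1)+1+(n-1)+1+(n-1) = 3n-1$ flips.
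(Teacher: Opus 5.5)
\textbf{There is a genuine gap.} Your bookkeeping matches the paper's: the stars $\mathcal{C}^b_i$ obtained by fixing $r$, Hamming-distance ascents from \cref{thm:short_star} in three phases, and two flips of $r$ giving $3|V|-1$. But the one non-trivial claim is that after the second flip of $r$ the last phase can be run so that $r$ does not want to flip a third time. You defer exactly this as ``the real work'', and the reason you sketch does not supply it. In the last phase the danger is not confined to the stars that forced $r$ back. \emph{Any} star whose $\phi_i$ rises while $r=0$ could re-trigger $r$. Saying that a forced decrease ``certifies the $r=0$-favourable peak'' is not an argument for any star. The per-star monotonicity you seek in phase 2 is also a red herring: it is not always available (as you note), and it is not what the proof needs.

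\textbf{How the paper closes it.} It uses a tie-breaking rule plus a no-revisit argument. Whenever $\mathcal{C}^b_i$ has two reachable peaks, go to the one whose centre keeps its end-of-phase-1 value. Now suppose $r$ wanted to flip after your phase 3. Flip it and re-run phase 2. Every star then returns to its end-of-phase-2 state. For stars where $\mathcal{C}^1_i$ has a unique peak this is trivial; for stars with two peaks it follows from the tie-break, since the peaks of a star differ only in the centre. So the ascent would revisit an assignment, which is impossible.

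\textbf{Completing your route instead.} Your approach can also be finished, with no condition on phase 2. Let $F_i(b,s)$ be the total weight of the constraints touching star $i$ when $x_r=b$ and the star is in state $s$, so $\phi_i=F_i(1,\cdot)-F_i(0,\cdot)$. Let $Q_i,P_i$ be the star's states after phases 1 and 2. In phase 3 there are two cases.
\begin{itemize}
\item Suppose $\mathcal{C}^0_i$ has a peak with the same centre value as $P_i$. Reach it by leaf flips only. Since $P_i$ is a peak of $\mathcal{C}^1_i$, a leaf wants to move with the centre fixed only because $x_r$ changed. Each such move therefore lowers $\phi_i$ by $|w(\{r,v\})|$.
\item Otherwise $Q_i$ is the only peak of $\mathcal{C}^0_i$, and it differs from $P_i$ in the centre. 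Then $\phi_i(Q_i)-\phi_i(P_i)=[F_i(1,Q_i)-F_i(1,P_i)]+[F_i(0,P_i)-F_i(0,Q_i)]<0$. Both brackets are negative: phase 2 was a strict ascent of $\mathcal{C}^1_i$ from $Q_i$ to $P_i$, and phase 3 is a strict ascent of $\mathcal{C}^0_i$ from $P_i$ to $Q_i$.
\end{itemize}
Since only endpoints matter, the Hamming-distance ascents of \cref{thm:short_star} still flip each variable at most once per phase. So $w(\{r\})+\sum_i\phi_i$ stays negative, $r$ stays put, and the final assignment is a local peak. Either argument gives the missing step.
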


\begin{proof}
For simplicity of notation, relabel our variables so that their names follow the elimination-tree $T$ of the VCSP:
\begin{equation}
V = \{r\} \cup \Big( \bigcup_{i = 1}^m \{c_i\} \cup V_i \Big)
\end{equation}
where $r$ is the root of $T$ and $V_i$ are the leaves of $T$ at depth three whose parent is the variable $c_i$ at depth two.
For elimination-forests, analyse each elimination-tree separately.

Define the induced subproblems $\mathcal{C}^b_{i}$ on $\{c_i\} \cup V_i$ with background $x_r = b$.\footnote{
By definition of elimination-tree, there are no constraints between variables $\{c_i\} \cup V_i$ and $\{c_j\} \cup V_j$ for $i \neq j$.
Hence, we do not need to specify the background assignment to $c_j$ or the variables in $V_j$, only an assignment to $x_r$.}
The constraint graphs of these induced subproblems are stars, so \cref{thm:short_star} will apply to them.

Now consider the following ascent-following algorithm that works in five phases:
\begin{enumerate}
\item Bring each $\mathcal{C}_i^{x^0_r}$ to its local peak quickly (i.e., following \cref{thm:short_star}) while keeping the $r$ variable fixed.
If the initial assignment allows two different peaks to be reached in $\mathcal{C}_i^{x_r}$ then follow the short ascent to the higher fitness peak.
For simplicity of notation, relabel the domains so that the assignment at the end of this step is $x^{\scriptsize \circled{1}} = 0^{|V|}$.
\item If flipping $x^{\scriptsize \circled{1}}_r$ from $0$ to $1$ increases fitness then flip it; otherwise we have found a fitness peak.
\item Bring each $\mathcal{C}^1_i$ to its local peak quickly.
If $x^{\scriptsize \circled{1}}$ allows two different peaks to be reached in $\mathcal{C}_i^{x_r}$ then follow the short ascent to the peak with $x[c_i] = 0$.
Call resulting assignment $x^{\scriptsize \circled{3}}$.
\item If flipping $x^{\scriptsize \circled{3}}_r$ from $1$ to $0$ increases fitness then flip it; otherwise we have found a fitness peak.
\item Bring each $\mathcal{C}^0_i$ to its local peak quickly.
If $x^{\scriptsize \circled{3}}$ allows two different peaks to be reached in $\mathcal{C}_i^{x_r}$ then follow the short ascent to the peak with $x[c_i] = 0$.
Call resulting assignment $x^{\scriptsize \circled{5}}$.
\end{enumerate}
Note that in this algorithm, if the fitness landscape corresponding to $\mathcal{C}^1_i$ has two fitness peaks then, by \cref{thm:short_star}, we will have $x[c_i] = 0$ at all steps of the algorithm that are after the notational relabelling in phase (1).

Now, we will show that $x^{\scriptsize \circled{5}}$ is a local peak.
We proceed by contradiction: suppose that in $x^{\scriptsize \circled{5}}$, variable $r$ wants to flip to $1$.
In that case -- as a new phase (6) -- flip it to $1$, and then -- as a new phase (7) -- repeat phase (3) but call the resulting assignment $x^{\scriptsize \circled{7}}$.
Since we were following an ascent, we must have $\mathcal{C}(x^{\scriptsize \circled{7}}) > \mathcal{C}(x^{\scriptsize \circled{3}})$.
That means we must have at least one $i$ such that the peak in $\mathcal{C}^1_{i}$ is different between $x^{\scriptsize \circled{3}}$ and $x^{\scriptsize \circled{7}}$.
But if there were two fitness peaks then by the observation in the previous paragraph we would have $x^{\scriptsize \circled{3}}[c_i] = x^{\scriptsize \circled{7}}[c_i] = 0$ and by \cref{thm:short_star} they would be the same peak.
A contradiction.

From this, we can see that $x^{\scriptsize \circled{5}}$ is a local peak and thus there is an ascent in $\mathcal{C}$ from any initial assignment to the local peak that flips $r$ only twice.
Each variable in $\bigcup_{i = 1}^m (\{c_i\} \cup V_i)$ flips at most three times (at most once in each of phase 1, 3, and 5).
Thus the shortest ascent is $\leq 3|V| - 1$.
%
\end{proof}

As far as we know, \cref{thm:short_td3} provides the first provable separation between some ascent exponential versus shortest ascent polynomial for a fixed structural parameter of a constraint graph.\footnote{
If we assign arc-directions to the scopes of binary Boolean VCSPs, as done by \textcite{effective-and-efficient}, then a prior separation exists for the structural parameter of being a directed acyclic graph.
Oriented VCSPs are the only VCSPs with DAG-structured constraint graphs. Oriented VCSPs always have Hamming-distance ascents 
and even stronger: \textcite{effective-and-efficient} proved that reasonable local search algorithms like random-ascent find short ascents with high probability.
However, both the \textcite{hakenSteepest} and \textcite{slow-greed} constructions of exponential steepest ascent are oriented VCSPs.
Thus, there is an exponential separation between steepest ascent and random ascent for the `DAG' structural parameter.
}
And we believe our results can be pushed further because all previous constructions where the shortest ascent is exponential have unbounded treedepth (at least logarithmic in the number of variables) and we see no way to avoid this.
However, to extend our results to higher treedepth seems to need proof techniques that go beyond analyzing the structure of local peaks.
So at this time, we can only conjecture that unbounded treewidth is a necessary condition for exponential shortest ascents, or equivalently:
\begin{conjecture}
If a family of VCSPs has its treedepth bounded by a constant then given any initial assignment, there exists an ascent of polynomial length to some local peak.
\label{conj:treedepth}
\end{conjecture}

Our intuition for the unavoidability of unbounded treedepth to achieve exponential 
shortest ascents comes from the structure of the chain constructions that we discussed in \ref{sec:chaingadgets}.
All previous chain of gadget constructions where even the shortest ascents are 
exponentially long have the following property: at any given step only one gadget is ``active'', where  a gadget is ``active'' if it has some potential change in its variables that is fitness-increasing.
This ``activation'' of the gadget is moved left or right along the chain as the ascent proceeds.
For a ``star'' of gadgets, there is no similar concept of left or right, 
so to force the shortest ascents to be exponential we
expect to have to expand the ``centre'' of the ``star'' to  achieve a similar property of specifying which gadget is ``activated''.
A ``centre'' with this property would require on the order of $\log(n)$ variables. 

Building on this intuition, 
we now obtain a polynomial bound on the length of shortest ascents by using the 
parameter of \emph{vertex cover number} instead of treedepth.
We say that a VCSP $\mathcal{C} = (V,\mathcal{S},w)$ on $n$ variables has vertex cover number $k$ if there exist $k$ variables $K \subseteq V$ such that the induced subproblem of $\mathcal{C}$ on $V \setminus K$ in any background $z \in \{0,1\}^K$ has at most unary constraints.
Vertex cover number is a more fine-grained parameter than treewidth, pathwidth, and treedepth: 
$\text{treewidth}(G) \leq \text{pathwidth}(G) \leq \text{treedepth}(G) - 1 \leq \text{vertexcover\#}(G)$. 
A star-structured VCSP has vertex cover number $1$.
We can think of a VCSP with vertex cover number $k$ as a starlike graph that allows the centre to consist of $k$ variables while still restricting the other $n - k$ `leaf' variables to form an independent set.
This allows us to formalize our above intuition about the necessary conditions for exponential shortest ascents by using vertex cover number to measure the size of the ``large centre''.
Specifically, if this vertex cover number is $O(\log n)$ then we prove that the resulting fitness landscapes have a polynomial shortest ascent from any initial assignment:

\begin{theorem}
If a Boolean VCSP-instance $\mathcal{C}$ has vertex cover number $k$ then the shortest ascent is $\leq 2^k(n - k + 1) - 1$.
\label{thm:vcn_shortest_upperbound}
\end{theorem}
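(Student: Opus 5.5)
We build a specific ascent that alternates between optimising the leaves and flipping one variable of the vertex cover. The bound then follows from a potential argument showing that no assignment to the cover is visited twice.

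Fix a vertex cover $K \subseteq V$ with $|K| = k$, and let $L = V \setminus K$ be the $n-k$ ``leaf'' variables. By definition of vertex cover number, for every background $z \in \{0,1\}^K$ the induced subproblem $\mathcal{C}^z$ on $L$ has only unary (and nullary) constraints. So, as with the unary case and the star case in \cref{thm:short_star}, $\mathcal{C}^z$ has a unique best leaf assignment $x^{*z}$ (break ties by not flipping). From any leaf assignment, $x^{*z}$ is reached by an ascent of Hamming-distance: flip exactly those leaves whose induced unary under $z$ is strictly positive in the direction of flipping. Define
\[
F(z) = \max_{y \in \{0,1\}^L} \mathcal{C}(z\,y) = \mathcal{C}(z\,x^{*z}),
\]
a quantity that depends only on $z$.

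The ascent alternates two kinds of phase, starting from the initial assignment $z^0 y^0$:
\begin{enumerate}
\item \textbf{Leaf phase.} With the current $z$ held fixed, flip each leaf whose flip strictly increases fitness. By unarity, no leaf's preference changes during the phase, so each leaf flips at most once, the phase uses at most $n-k$ strictly improving steps, and it ends at $z\,x^{*z}$ with fitness $F(z)$.
\item \textbf{Cover phase.} If some $u \in K$ has a strictly improving flip, flip it to obtain $z'$; otherwise stop.
\end{enumerate}
At termination no leaf and no cover variable can improve, so the final assignment is a local peak. Every step is strictly improving, so the whole sequence is a valid ascent.

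The key step, and the only real obstacle, is bounding the number of cover phases. I would use $F$ as a potential. After the leaf phase at $z$ the fitness equals $F(z)$. The cover flip to $z'$ yields a fitness $f' > F(z)$ at the assignment $z'\,x^{*z}$. Since $F(z')$ is a maximum over leaf assignments that includes $x^{*z}$, we get $F(z') \geq f' > F(z)$. Hence the sequence $z^0, z^1, z^2, \dots$ of cover assignments has strictly increasing $F$, so no $z$ repeats and there are at most $2^k$ of them. This yields at most $2^k - 1$ cover flips and at most $2^k$ leaf phases.

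Counting gives at most $2^k(n-k)$ leaf flips plus $2^k - 1$ cover flips, a total of at most $2^k(n-k+1) - 1$, which is the claimed bound. The one point to check carefully is that the leaf phase really reaches the maximiser defining $F(z)$. This holds because the induced problem is purely unary, so the local optimum over $L$ is the global optimum over $L$; this is exactly what makes $F$ a well-defined potential.
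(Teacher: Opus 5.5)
Your proof is correct and follows the paper's argument: the same alternation of a Hamming-distance leaf phase with a single improving cover flip, and the same count of at most $2^k$ leaf phases of at most $n-k$ steps each plus at most $2^k-1$ cover flips. Your potential $F(z)=\max_{y}\mathcal{C}(z\,y)$ is a more explicit form of the paper's observation that the assignments $z\,y^{*z}$ reached after each leaf phase cannot repeat along an ascent; it also has the small advantage of not requiring the leaf peak $y^{*z}$ to be unique when some induced unary weight is zero.
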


\begin{proof}
Let $K \subseteq V$ be a vertex cover of size $k$.
For each $z \in \{0,1\}^K$, define $\mathcal{C}^z$ as the induced subproblem on $V \setminus K$ with background $z$.
Since $K$ is a vertex cover, it means there are only unary constraints in each $\mathcal{C}^z$, so each has a unique fitness peak $y^{*z}$ that can be reached by an ascent of Hamming-distance.
Now consider the following two-phase ascent-following algorithm:

\begin{enumerate}
\item Given a current assignment $x = zy$ with $z \in \{0,1\}^K$, bring $\mathcal{C}^z$ to its local peak $y^{*z}$ quickly.
\item If there is some flip in $K$ that increases fitness, make it and return to phase (1).
\end{enumerate}

\noindent Notice that at the end of phase (1), the assignment has the form $zy^{*z}$. 
Since each $z$ has a unique $y^{*z}$, there are only $2^k$ such assignments.
An ascent cannot revisit the same assignment.
At most $n - k$ steps are taken in phase (1) and one step in phase (2), thus the overall ascent has length $\leq 2^k(n - k + 1) - 1$.
\end{proof}

Unfortunately, our star of gadgets on $4m + 1$ variables in \cref{ex:star_of_gadgets} has vertex cover number $2m + 1$ and so unlike the treedepth case, it cannot be combined with \cref{thm:vcn_shortest_upperbound} to show a separation between longest and shortest ascents.
Such a separation is possible with a more complicated construction of a starlike graph with a logarithmic number of variables in the centre. 
The construction is easiest to describe if we allow higher-arity constraints on the centre 
variables and then specify the induced unary constraints that these impose on the leaves.
\begin{example}\label{ex:vcn_lowerbound}
\textbf{(Starlike-VCSP with exponential steepest ascent)}
Consider a Boolean VCSP $\mathcal{C}$ on $n := m + \log m + 2$ variables\footnote{
for simplicity, let $m$ be an integer power of $2$.} with a set of `centre' variables $\{+,-\}\cup G$ with $|G| = \log m$ and `leaf' variables $L = \{0,\ldots,m - 1\}$.
We will use the leaf variables $L$ as the `tape' for a standard binary counter.
Let $\text{gc}: \{0,1\}^G \rightarrow L$ be a Gray code -- an ordering of the Boolean numerals such that two successive values differ in only one bit.
We will use $G$ as a `tape-index' for keeping track of the `head' that points at cells on the `tape' of the standard binary counter.
Given the $4m$ background assignments $x_+x_-\;y \in \{0,1\}^{\{+,-\}}\times\{0,1\}^G$, we will specify $\mathcal{C}$ by its induced unary VCSP subproblems on the leaves $L$:
\begin{align}
\mathcal{C}_k & =  (L,U,w_k) & \text{ in background assignment } & x_+ = x_- \text{ and } y = \text{gc}^{-1}(k), & \\
\mathcal{C}^+_k & = (L,U,w^+_k) & \text{ in background assignment } & x_+ = 1, x_- = 0 \text{ and } y = \text{gc}^{-1}(k), \text{ and} &  \\
\mathcal{C}^-_k & = (L,U,w^-_k) & \text{ in background assignment } & x_+ = 0, x_- = 1 \text{ and } y = \text{gc}^{-1}(k) &
\end{align}
where $U = \{\emptyset \} \cup \{ \{i\} \; | \; i \in L\}$ is the set of nullary and unary constraints on $L$ and
\begin{align}
w^+_k(\;\emptyset\;) & = k + 1 \label{eq:vcn_lb_wk_first}\\
w^-_k(\;\emptyset\;) & = -(k + 1) \\
w_k(\;\emptyset\;) & =  \begin{cases}
  0 & \text{if } k = 0 \\
  (3m + 4) \cdot 2^k - (m + 2) & \text{if } k > 0
\end{cases} \\
w^{\pm}_k(\{i\}) & = (3m + 4) \cdot 2^i \\
w_k(\{i\}) & = \begin{cases}
-1 & \text{if } i < k \\
1 & \text{if } i = k \\
(3m + 4) \cdot 2^i & \text{if } i > k \label{eq:vcn_lb_wk_last}
\end{cases}.
\end{align}
The above can also be specified directly for $\mathcal{C}$ as $m$ general valued constraints, each of $(\log m + 3)$-arity with scopes involving the center variables $\{+,-\} \cup G$ and one of the leaf variables $L$.

We will use an ascent on this VCSP-instance starting at $00\;0^{\log m}\;0^m \in \{0,1\}^{\{+,-\}}\times \{0,1\}^G \times \{0,1\}^L$ and ending at a local peak $00\;\text{gc}^{-1}(m - 1)\;0^m$ to implement a standard binary counter on the leaves $L$.
We call this the $L$-counter.
Whenever we need to carry bits in the $L$-counter to advance the counter from $1^{k - 1}0z \in \{0,1\}^L$ to $0^{k - 1}1z$ for any $z \in \{0,1\}^{m - k - 1}$, we will have our ascents assignment at $x^{\scriptsize \circled{0}} = 00\;0^{\log m} 1^{k - 1}0z$ with fitness
\begin{align}
\mathcal{C}(x^{\scriptsize \circled{0}}) & = w_0(\emptyset) + \sum_{i = 0}^{k-1} w_0(\{i\}) + \sum_{i = 1}^{m - k - 1}w_0(\{k + i\})z_{i}\\
& = (3m + 4)\cdot(2^{k} - 1 + 2^{k+1} \cdot [z]_\text{bin} )
\end{align}
where $[z]_\text{bin}$ is the integer encoded by $z$ in binary and then follow these eight phases:
\begin{enumerate}
\item Flip $x_+$ to $1$ to allow us to move up the `tape-index' Gray code.
This gets us to $x^{\scriptsize \circled{1}} = 10\;0^{\log m} 1^{k - 1}0z$ with fitness 
\begin{align}
\mathcal{C}(x^{\scriptsize \circled{1}}) & = w^+_0(\emptyset) + \sum_{i = 0}^{k-1} w^+_0(\{i\}) + \sum_{i = 1}^{m - k - 1}w^+_0(\{k + i\})z_{i}\\
& = 1 + \mathcal{C}(x^{\scriptsize \circled{0}})
\end{align}
\item Increment the `tape-index' $i \in L$ from $0$ to $k$ by incrementing the corresponding Gray code on $G$ from $\text{gc}^{-1}(0) = 0^{\log m} \in \{0,1\}^G$ to $\text{gc}^{-1}(k) \in \{0,1\}^G$.
This gets us to $x^{\scriptsize \circled{2}} = 10\;\text{gc}^{-1}(k) 1^{k - 1}0z$ with fitness
\begin{align}
\mathcal{C}(x^{\scriptsize \circled{2}}) & = w^+_k(\emptyset) + \sum_{i = 0}^{k-1} w^+_k(\{i\}) + \sum_{i = 1}^{m - k - 1}w^+_k(\{k + i\})z_{i}\\
& = k + 1 + \mathcal{C}(x^{\scriptsize \circled{0}})
\end{align}
\item Flip $x_+$ to $0$ to allow us to enter the `clear and carry the ones' stage.
This gets us to $x^{\scriptsize \circled{3}} = 00\;\text{gc}^{-1}(k) 1^{k - 1}0z$
with fitness
\begin{align}
\mathcal{C}(x^{\scriptsize \circled{3}}) & = w_k(\emptyset) + \sum_{i = 0}^{k-1} w_k(\{i\}) + \sum_{i = 1}^{m - k - 1}w_k(\{k + i\})z_{i}\\
& = (3m + 4) \cdot 2^k - (m + 2) - (k - 1)  + (3m + 4)\cdot 2^{k+1}\cdot[z]_\text{bin} \\
& =  2m - k + 1 + (3m + 4)\cdot(2^{k} - 1 + 2^{k+1} \cdot [z]_\text{bin} ) \\
& = 2m - k + 1 + \mathcal{C}(x^{\scriptsize \circled{0}}) \\
& > \mathcal{C}(x^{\scriptsize \circled{2}})
\end{align}
where the last strict inequality follows from $k \leq m - 1$.
\item For all $i < k$, flip every $x_i$ for $i \in L$ from $1$ to $0$, thus taking the counter from $1^{k - 1}0z \in \{0,1\}^L$ to $0^{k-1}0z$.
This gets us to $x^{\scriptsize \circled{4}} = 00\;\text{gc}^{-1}(k) 0^{k - 1}0z$
with fitness
\begin{align}
\mathcal{C}(x^{\scriptsize \circled{4}}) & = w_k(\emptyset) + \sum_{i = 1}^{m - k - 1}w_k(\{k + i\})z_{i}\\
& = (3m + 4) \cdot 2^k - (m + 2) + (3m + 4)\cdot 2^{k+1}\cdot[z]_\text{bin} \\
& = \mathcal{C}(x^{\scriptsize \circled{3}}) + k - 1
\end{align}
\item Flip $x_k$ 
from $0$ to $1$, thus taking the counter from $0^{k-1}0z$ to $0^{k-1}1z \in L$.
This gets us to $x^{\scriptsize \circled{5}} = 00\;\text{gc}^{-1}(k) 0^{k - 1}1z$
with fitness
\begin{align}
\mathcal{C}(x^{\scriptsize \circled{5}}) & = w_k(\emptyset) + w_k(\{k\}) + \sum_{i = 1}^{m - k - 1}w_k(\{k + i\})z_{i}\\
& = \mathcal{C}(x^{\scriptsize \circled{3}}) + k
\end{align}
\item Flip $x_-$ to $1$ to allow us to move down the `tape-index' Gray code.
This gets us to $x^{\scriptsize \circled{6}} = 01\;\text{gc}^{-1}(k) 0^{k - 1}1z$
with fitness
\begin{align}
\mathcal{C}(x^{\scriptsize \circled{6}}) & = w^-_k(\emptyset) + w^-_k(\{k\}) + \sum_{i = 1}^{m - k - 1}w^-_k(\{k + i\})z_{i}\\
& = -(k + 1) + (3m + 4)\cdot2^k + (3m + 4)\cdot 2^{k+1}\cdot[z]_\text{bin} \\
& =  m + 2 + 2m - k + 1 + (3m + 4)\cdot(2^k - 1 + 2^{k+1}\cdot[z]_\text{bin}) \\
& = m + 2 + \mathcal{C}(x^{\scriptsize \circled{3}}) \\
& > \mathcal{C}(x^{\scriptsize \circled{5}})
\end{align}
\item Decrement the `tape-index' $i \in L$ from $k$ to $0$ by decrementing the corresponding Gray code on $G$ from $\text{gc}^{-1}(k) \in \{0,1\}^G$ to $\text{gc}^{-1}(0) = 0^{\log m} \in \{0,1\}^G$.
This gets us to $x^{\scriptsize \circled{7}} = 01\;0^{\log m} 0^{k - 1}1z$
with fitness
\begin{align}
\mathcal{C}(x^{\scriptsize \circled{7}}) & = w^-_0(\emptyset) + w^-_0(\{k\}) + \sum_{i = 1}^{m - k - 1}w^-_0(\{k + i\})z_{i}\\
& = m + k + 1 + \mathcal{C}(x^{\scriptsize \circled{3}}) \\
& = 3m + 2 + \mathcal{C}(x^{\scriptsize \circled{0}})
\end{align}
\item Flip $x_-$ to $0$ to be ready for the next increment of the $L$-counter.
This gets us to $x^{\scriptsize \circled{8}} = 00\;0^{\log m} 0^{k - 1}1z$ with fitness:
\begin{align}
\mathcal{C}(x^{\scriptsize \circled{8}}) & = w_0(\emptyset) + w_0(\{k\}) + \sum_{i = 1}^{m - k - 1}w_0(\{k + i\})z_{i} \\
& = (3m + 4)\cdot2^k + (3m + 4)\cdot 2^{k+1}\cdot[z]_\text{bin} \\
& = 3m + 4 + (3m + 4)\cdot(2^k - 1 + 2^{k+1}\cdot[z]_\text{bin}) \\
& = 3m + 4 + \mathcal{C}(x^{\scriptsize \circled{0}}) \\
& > \mathcal{C}(x^{\scriptsize \circled{7}}) 
\end{align}
\end{enumerate}
Via our calculations of fitness above, we check that the induced unaries specified in \cref{eq:vcn_lb_wk_first}-\eqref{eq:vcn_lb_wk_last} will allow the $L$-counter to execute the eight phases via ascent steps for all binary integers until $1^m$.
In the case of $k = 0$, we do not need to worry about carry bits, and so can execute just phase 5 instead of all eight phases.

Since our ascent passes through each assignment $z \in \{0,1\}^L$, we know that it's length is $\geq 2^m$ which, given that the number of variables $n = m + \log m + 2$, is $2^{\Omega(n)}$.
Finally, given that $\{+,-\} \cup G$ is (by design) a vertex cover of $\mathcal{G}$ and $|\{+,-\} \cup G| = \log m + 2 \in O(\log n)$, we have an 
exponential longest ascent in a VCSP-instance with a logarithmic vertex cover.
\end{example}

In other words, \cref{ex:vcn_lowerbound} as a lower bound on the longest ascent together with \cref{thm:vcn_shortest_upperbound} for the upper bound on shortest ascent from any initial assignment provide our second provable separation between some ascent exponential versus shortest ascent polynomial for a structural parameter of a constraint graph.
Thus, both treedepth three and logarithmic vertex cover number can provide the space needed to distinguish between different strict local search algorithms.
It is just that in the case of \cref{ex:star_of_gadgets} for treedepth, introducing just one additional variable with binary constraints allowed us to intertwine two simple landscapes into a complex one with long ascents.
But in the case of \cref{ex:vcn_lowerbound} for vertex cover number,
introducing $\log m + 2$ additional variables with $(\log m + 3)$-arity  constraints allowed us to intertwine $4m$ of the simplest possible landscapes into a complex one with long ascents. 
We expect it is possible to replace these $(\log m + 3)$-arity constraints by binary ones without significantly increasing the vertex cover number.

\section{Conclusion}

To understand when local search is efficient and when it is not, 
we studied which fitness landscapes are easy to navigate and which are hard.
We showed that adding just one additional dimension that connects two simple and easy to navigate sublandscapes where all ascents are linear can create a complex landscape that harbours exponentially long ascents.
In terms of the valued constraint satisfaction problems that represent these landscapes, this required us to move from the well-established chain of gadgets constructions to a new kind of construction based on a star of gadgets.
The single variable at the centre of this star of gadgets provided the extra dimension, and the binary constraints on this additional variable intertwined two otherwise simple VCSP subproblems into something much more complex.

Our shift to considering stars of gadgets pushed us to consider the treedepth and the vertex cover number of the associated constraint graphs as new parameters for the complexity of local search.
As a result, we were able to establish treedepth three and logarithmic vertex cover number as the first graph parameters where we can show that the shortest ascent from any initial assignment is polynomial while the longest ascent can be exponential.
This gap makes the question of the tractability of finding local peaks into a question of whether specific local search algorithms end up following one of the short ascents or one of the long ones.
We started to study this question for the steepest ascents, which are the ones followed by greedy local search.
It would be interesting for future work to consider other popular algorithms that are used in engineering practice (random ascent~\cite{randomFitter1}, simulated annealing, or advanced simplex-pivot-rules like random facet~\cite{randomFacetBound,effective-and-efficient}) or even forced on us by nature when we are modelling energy minimization in physical systems, biological evolution~\cite{W32, evoPLS, KazThesis} or social dynamics in business~\cite{orgBehavOrig,orgBehavNetwork} and economics~\cite{R09}.
Overall, our results suggest that studying the parameterized complexity of valued constraint satisfaction problems can both help us to better understand when local search is efficient and teach us something deeper about how complexity emerges in high-dimensional fitness landscapes.


\printbibliography


\end{document}